\newcommand{\ssmile}[1]{%
	\mathop{%
		% get the dimensions of \smile in \displaystyle
		\sbox0{$\displaystyle\smile$}%
		% stretch the \smile by means of \scalebox and lower it by means of \raisebox 
		\raisebox{-0.5em}{\scalebox{#1}[1]{\copy0}}}	
	\displaylimits}
\newtheorem{lemma}{Lemma}
\begin{document}

\title{Interferometric sensitivity and entanglement by scanning through quantum phase transitions in spinor Bose-Einstein condensates}

\author{P. Feldmann}
\affiliation{Institut f\"ur Theoretische Physik, Leibniz Universit\"at Hannover, Appelstr. 2, DE-30167 Hannover, Germany}
\author{M. Gessner}
\affiliation{QSTAR, INO-CNR, and LENS, Largo Enrico Fermi 2, IT-50125 Firenze, Italy}
\author{M. Gabbrielli}
\affiliation{QSTAR, INO-CNR, and LENS, Largo Enrico Fermi 2, IT-50125 Firenze, Italy}
\affiliation{Dipartimento di Fisica e Astronomia, Universit\`a degli Studi di Firenze, via Sansone 1, I-50019 Sesto Fiorentino, Italy}
\author{C. Klempt}
\affiliation{Institut f\"ur Quantenoptik, Leibniz Universit\"at Hannover, Welfengarten 1, DE-30167 Hannover, Germany}
\author{L. Santos}
\affiliation{Institut f\"ur Theoretische Physik, Leibniz Universit\"at Hannover, Appelstr. 2, DE-30167 Hannover, Germany}
\author{L. Pezz\`{e}}
\affiliation{QSTAR, INO-CNR, and LENS, Largo Enrico Fermi 2, IT-50125 Firenze, Italy}
\author{A. Smerzi}
\affiliation{QSTAR, INO-CNR, and LENS, Largo Enrico Fermi 2, IT-50125 Firenze, Italy}

\begin{abstract}
Recent experiments have demonstrated the generation of entanglement by quasi-adiabatically driving through quantum phase transitions of a ferromagnetic spin-$1$ Bose-Einstein condensate in the presence of a tunable quadratic Zeeman shift. We analyze, in terms of the Fisher information, the interferometric value of the entanglement accessible by this approach. In addition to the Twin-Fock phase studied experimentally, we unveil a second regime, in the broken axisymmetry phase, which provides Heisenberg scaling of the quantum Fisher information and can be reached on shorter time scales. We identify optimal unitary transformations and an experimentally feasible optimal measurement prescription that maximize the interferometric sensitivity. We further ascertain that the Fisher information is robust with respect to non-adiabaticity and measurement noise. Finally, we show that the quasi-adiabatic entanglement preparation schemes admit higher sensitivities than dynamical methods based on fast quenches.
\end{abstract}

%\pacs{??}

\maketitle

%%%%%%%%%%%%%%%%%%%%%%%%%%%%%%%%%%%%%%%%%%%%%%%%%%%%%%%%%%%%%%%

\section{Introduction}
\label{sec:Intro}
Atom interferometry has become an indispensable tool for both the testing of 
fundamental physics and precision measurements~\cite{Pritchard2009}. 
Without entanglement between the atoms, the attainable sensitivity is fundamentally limited by 
the standard quantum limit (SQL)~\cite{GiovannettiPRL2006, PezzePRL2009}. 
Employing multipartite entanglement allows to shift this bound 
towards the Heisenberg limit (HL)~\cite{GiovannettiPRL2006, PezzePRL2009, HyllusPRA2012}. 
In view of the high effort required for handling coherent ensembles with a large atom number $N$, it is crucial that the HL replaces the SQL scaling of the sensitivity $\propto \sqrt{N}$  with $\propto N$. Entanglement 
that---in the absence of technical noise---facilitates to surpass the SQL 
is unambiguously witnessed by the Fisher information (FI).

In spinor Bose-Einstein condensates (BEC), entanglement useful to enhance the
sensitivity of atom interferometers beyond the SQL can 
be generated exploiting spin-changing collisions~\cite{PezzeRMP}.
A common realization relies on the parametric amplification of quantum fluctuations 
leading to squeezed Gaussian states~\cite{LuckeSCIENCE2011, GrossNATURE2011, 
HamleyNATPHYS2012, LuckePRL2014, PeiseNATCOMM2015, KrusePRL2016, Oberthaler2017}.
Sub-SQL sensitivities~\cite{LuckeSCIENCE2011, KrusePRL2016, Oberthaler2017}, entanglement~\cite{LuckePRL2014, PeiseNATCOMM2015}, and squeezing of up to 
$\SI{10}{dB}$ beyond the SQL~\cite{PeiseNATCOMM2015, GrossNATURE2011, HamleyNATPHYS2012} 
have been demonstrated~\cite{PezzeRMP}.  
Furthermore, multipartite entanglement in spinor BECs can be also generated near the ground state of 
ferromagnetic~\cite{ZhangPRL2013} and antiferromegnetic~\cite{WuPRA2016} spin-$1$ BECs~\cite{KajtochPREPRINT}.
In the following we focus on the ferromagnetic case, relevant for experiments with $^{87}$Rb.
In the presence of an (effective) quadratic Zeeman shift $q$, the system
exhibits three quantum phases \cite{Stamper-KurnRMP2013,ZhangPRL2013}.
By preparing the BEC in the ground state at $q>q_c$ 
(here $q_c>0$ and the critical points are located at $q=\pm q_c$)
and slowly driving through both quantum phase transitions 
an entanglement depth of about \num{900} particles 
has been witnessed in the Twin-Fock (TF) phase at $q < -q_c$~\cite{LuoSCIENCE2017}, 
see also~\cite{HoangPNAS2016, VinitPRA2017}.

Recently \cite{shortversion}, we have shown that the ground state of a ferromagnetic at $q=0$ 
can be used for heralded generation of highly entangled macroscopic superposition states. 
In the present paper we extend this study and analyze the interferometric sensitivity of the entangled quantum states 
that are generated along the \mbox{(quasi-)}adiabatic passage when scanning over different values of $q$. 
Their full potential is revealed by considering atom interferometry involving all three modes, which generalizes the two-mode interferometry experimentally implemented, e.\,g., in \cite{KrusePRL2016}. We find that besides the TF 
state studied in Ref.~\cite{LuoSCIENCE2017, ZhangPRL2013} also 
the ground state at the center, $q=0$, of the broken axisymmetry phase leads to Heisenberg scaling. This state can be reached by \mbox{(quasi-)}adiabatically scanning over only a single critical point, stopping the evolution half-way to the TF state. We identify the interferometric transformations that provide the most sensitive phase imprinting and demonstrate that the measurement of particle numbers, 
an established experimental technique, is optimal for the phase estimation.
Our simulations show that performing the passage within reasonable, finite time does not strongly impair the attainable FI. We further analyze the effect of measurement noise and find that surpassing the SQL with state-of-the-art technology is well feasible. We finally show that, under realistic conditions, quasi-adiabatic schemes produce states with larger interferometric sensitivity than those accessible by parametric amplification.

%%%%%%%%%%%%%%%%%%%%%%%%%%%%%%%%%%%%%%%%%%%%%%%%%%%%%%%%%%%%%%%

\section{Fisher Information and Interferometry}
\label{sec:FI}
Let us briefly review some concepts used in the paper. 
In any atom interferometer, a phase $\theta$ is imprinted into an initial 
density matrix $\hat{\rho}_0$, leading to a $\hat{\rho}_\theta$ which is subsequently measured to determine the phase $\theta$. 
The resulting estimation of $\theta$ has an uncertainty, which is bounded by the (classical) Cram\'{e}r-Rao bound, $\Delta\theta\geq \Delta\theta_{\mathrm{CR}}$.  
Here 
\begin{equation}
\Delta \theta_{\mathrm{CR}} =1/\sqrt{\nu F} 
\end{equation}
and
\begin{equation}
F(\theta)=\sum_{\mu}\frac{1}{P(\mu|\theta)}\left(\frac{\partial P(\mu | \theta)}{\partial \theta}\right)^2
\label{eq:CFI}
\end{equation}
is the (classical) FI which depends on $\hat{\rho}_{\theta}$ and the chosen measurement observable. The sum comprises all possible measurement outcomes $\mu$ and $P(\mu | \theta)$ is the probability to measure $\mu$ given that the quantum state is $\hat{\rho}_{\theta}$. Finally,
$\nu$ is the number of measurements~\cite{Smerzi2014}. 
Maximizing the FI over all possible generalized quantum measurements
defines the quantum Fisher information (QFI) $F_Q$~\cite{Caves1994, Helstrom1976, Smerzi2014}:
$F\leq F_Q$ and the equality $F =  F_Q$ can always be reached by 
an optimal measurement~\cite{Caves1994}. 
Correspondingly, a quantum Cram\'{e}r-Rao bound is introduced as 
\begin{equation}
\Delta \theta_{\mathrm{QCR}} = 1/\sqrt{\nu F_Q}
\end{equation}
with $\Delta\theta_{\mathrm{CR}}\geq \Delta\theta_{\mathrm{QCR}}$.
For $N$ qubits, we have $F_Q\leq N^2$~(HL) \cite{GiovannettiPRL2006, PezzePRL2009}, and 
$F_Q \leq N$~(SQL) if $\hat{\rho}_0$ is not entangled~\cite{PezzePRL2009}. 
Thus both the classical and quantum FI witness interferometrically useful entanglement: 
$F_Q \geq F > N$ is equivalent to a $\Delta \theta_{\mathrm{QCR}}$ undercutting the SQL.

We assume that the phase $\theta$ is imprinted by a collective unitary transformation across $n$ modes. 
Let $\hat{g}_j$ be the generators of the defining representation of $\mathfrak{su}(n)$. We denote the vector comprising these $d_n=n^2-1$ generators with respect to the $i$-th of $N$ particles as $\mathbf{\hat{g}}^{(i)}\equiv (\hat g_1^{(i)},\ldots,\hat g_{d_n}^{(i)} )$. Then the final density matrix acquires the form
$\hat{\rho}_\theta = \hat{U}(\theta)\hat{\rho}_0 \hat{U}^\dagger\!(\theta)$, with
$\hat{U}(\theta) = \exp (-i\theta\, \mathbf{u}\cdot\mathbf{\hat{G}} )$, where 
we call $\mathbf{\hat{G}} = \sum_{i=1}^{N}\mathbf{\hat{g}}^{(i)}$ the collective $\mathbf{\hat g}$, and $\mathbf{u}\in S^{d_n-1}$  
is the interferometric direction.
For a pure initial state, $\hat{\rho}_0=|\psi\rangle\langle\psi|$, the QFI due to an interferometric transformation generated by $\hat{R}_{\mathbf{u}}\equiv\mathbf{u}\cdot\mathbf{\hat{G}}$ reads
\begin{equation}
F_Q[|\psi\rangle,\,\hat{R}_{\mathbf{u}}]=4\,(\Delta \hat{R})^2=4\, \mathbf{u}^T\!\boldsymbol{\Gamma}_{\mathbf{\hat{G}}}\;\!\mathbf{u},
\end{equation}
where $\boldsymbol{\Gamma}_{\mathbf{\hat{G}}}$ denotes the covariance matrix of the operators composing $\mathbf{\hat G}$ \cite{Gessner2016}, with elements $(\boldsymbol{\Gamma}_{\hat{\mathbf{G}}})_{ij}=\langle \hat G_i \hat G_j \rangle/2 + \langle \hat G_j \hat G_i \rangle/2 - \langle \hat G_i\rangle \langle \hat G_j\rangle$. The leading eigenvector of $\boldsymbol{\Gamma}_{\mathbf{\hat{G}}}$
identifies the optimal interferometric direction $\mathbf{u}_\mathrm{opt}$. By convention, in the case of qubits ($n=2$) the $\hat g_i$ are normalized such that $(\gamma_{i\max}-\gamma_{i\min})^2=1$, $\gamma_{i\max}$ and $\gamma_{i\min}$ being the maximum and minimum eigenvalues of $\hat g_i$, respectively. More generally, the SQL is given by $(\gamma_{i\max}-\gamma_{i\min})^2N$ and the HL by $(\gamma_{i\max}-\gamma_{i\min})^2N^2$~\cite{GiovannettiPRL2006}.

%%%%%%%%%%%%%%%%%%%%%%%%%%%%%%%%%%%%%%%%%%%%%%%%%%%%%%%%%%%%%%%

\section{Model}
\label{sec:Model}

In the following we study an optically trapped spin-$1$ BEC of 
$N$ particles with magnetic sublevels $m_f\in\{0,\pm 1\}$. 
In the single-mode approximation, the spinor dynamics is modeled by the Hamiltonian~\cite{LawPRL1998, UedaPR2012}
\begin{align}
\begin{split}
\hat{H} =& \Big[ \lambda \Big(\hat{N}_0-\frac{1}{2} \Big) +q \Big] (\hat{N}_++\hat{N}_-) 
\\
&+ \lambda (\hat{a}_1^\dagger \hat{a}_{-1}^\dagger \hat{a}_0^2 + \hat a_0^{\dagger 2}\hat a_1 \hat a_{-1}),
\label{eq:Hamiltonian}
\end{split}
\end{align} 
where $\hat{a}_i^\dagger$ and $\hat{a}_i$ are the creation and annihilation operators for $m_f=i$, and
$\hat{N}_{0,\pm} = \hat{a}^\dagger_{0,\pm 1} \hat{a}_{0,\pm1}$
are the number operators for the respective sublevels.
The total number of atoms is equal to $N$ and is assumed fixed here.
The interaction coefficient $\lambda$~(negative for ferromagnetic condensates 
such as the $F=1$ hyperfine groundstate manifold of $^{87}$Rb) depends on the trapping potential and microscopic 
parameters, namely the scattering lengths and the mass of the atoms~\cite{UedaPR2012, nota}. 
The effective quadratic Zeeman shift $q$ may be controlled by an external magnetic field 
and near-resonant microwave dressing~\cite{Stamper-KurnRMP2013, UedaPR2012}. 
Spin-changing collisions, described by the last line of Eq.~\eqref{eq:Hamiltonian}, 
preserve the total magnetization, i.\,e., the eigenvalue $D$ of $\hat{D}\equiv \hat{N}_+-\hat{N}_-$. 
Hence starting from an initial condensate in $m_f=0$ and then quenching---or slowly driving---the magnetic field so to 
prepare entangled states ensures that the system remains in the subspace of $D=0$. 
The dynamics thus takes place in the Hilbert space spanned by the Fock states $|k\rangle\equiv |N_-=k,\;N_0=N-2k,\;N_+=k\rangle$ with $N_i$ the eigenvalues of $\hat{N}_i$. By restricting the dynamics to the magnetization-free subspace, the linear coupling to the magnetic field and its fluctuations (linear Zeeman shift) becomes irrelevant, which leads to phase noise stability.

In the magnetization-free subspace, model~\eqref{eq:Hamiltonian} presents three quantum phases~\cite{ZhangPRL2013, Stamper-KurnRMP2013} as a function of $q$ with quantum phase transitions at $q=\pm q_c$, $q_c = 2N |\lambda|$:  
the polar~(P) phase~($q>q_c$), the broken-axisymmetry~(BA) phase~($|q|<q_c$), and the TF phase~($q<-q_c$). For large $N$, the respective ground states approach $|k=0\rangle$ in the P phase and the TF state $|\!\operatorname{TF}\rangle\equiv|k=N/2\rangle$ in the TF phase. In the BA phase,
all the three modes stay populated, with an average number of particles in $m_f=0$
given by $\langle \hat{N}_0 \rangle/N \simeq (1+q/q_c)/2$~\cite{PezzeRMP}. 

%%%%%%%%%%%%%%%%%%%%%%%%%%%%%%%%%%%%%%%%%%%%%%%%%%%%%%%%%%%%%%%

\section{Useful entanglement in the ground state of a spinor BEC}
\label{sec:PT}

%%%%%%%%%%%%%%
% figure 1
%%%%%%%%%%%%%%
\begin{figure}[t]
	\begin{center}
		\includegraphics[width=\columnwidth]{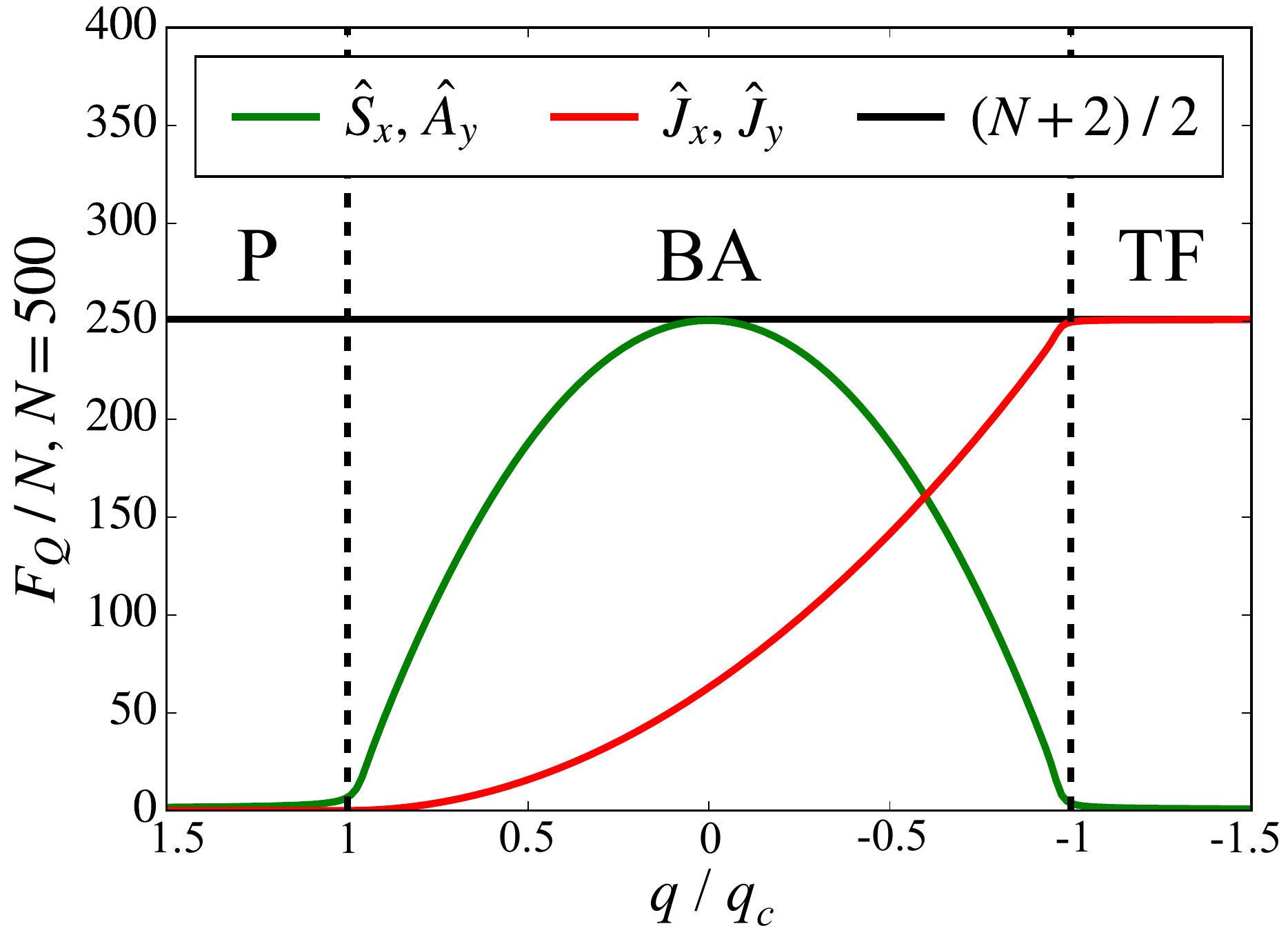}
	\end{center}
	\caption{(Color online) QFI for the ground state of a non-magnetized ferromagnetic spin-$1$ 
	BEC as a function of the quadratic Zeeman shift $q$. 
	For $\mathbf{\hat{G}}$ generating the $\mathfrak{su}(3)$, we depict 
	$F_Q/N$ corresponding to the eigendirections of $\boldsymbol{\Gamma}_{\mathbf{\hat{G}}}$ 
	for which $F_Q$ exceeds $N$. The quantum phase transitions are indicated by dashed vertical lines. 
	The solid horizontal line is $F_Q/N=(N+2)/2$.}
	\label{fig:1}
\end{figure}
%%%%%%%%%%%%%%
%%%%%%%%%%%%%%
%%%%%%%%%%%%%%

We first evaluate the QFI of the ground state $|\psi_0(q)\rangle$ of the Hamiltonian~(\ref{eq:Hamiltonian}) in the different phases. Arbitrary collective unitary rotations of a system of indistinguishable spin-1 particles, as considered in this paper, can be expressed by taking 
$\mathbf{\hat{G}}$ as the 8-dimensional vector of collective Gell-Mann operators, generating the $\mathfrak{su}(3)$. The covariance matrix $\mathbf{\Gamma_{\hat G}}$ is discussed in Appendix~\ref{app:GS_Cov}. 
We find it convenient to introduce the symmetric ($g$) and antisymmetric ($h$) creation and annihilation operators
\begin{equation}
\hat{g}^\dagger=\frac{1}{\sqrt{2}}(\hat{a}_1^\dagger+\hat{a}_{-1}^\dagger),\;\;\hat{h}^\dagger=\frac{1}{\sqrt{2}}(\hat{a}_1^\dagger-\hat{a}_{-1}^\dagger)
\end{equation}
and present our results in terms of three sets of collective pseudospin-$\frac{1}{2}$ operators, whose Schwinger representation reads
\begin{equation}
\begin{alignedat}{2} \label{SAoperators}
\hat{S}_x &= \frac{\hat{a}_0^\dag\hat{g} + \hat{g}^\dag\hat{a}_0}{2},
\hspace{2.5em}\hat{J}_x &= \frac{\hat{a}_1^\dag\hat{a}_{-1} + \hat{a}_{-1}^\dag\hat{a}_1 }{2},\\
\hat{S}_y &= \frac{\hat{a}_0^\dag\hat{g} - \hat{g}^\dag\hat{a}_0}{2i},
\hspace{2.5em}\hat{J}_y &= \frac{\hat{a}_1^\dag\hat{a}_{-1} - \hat{a}_{-1}^\dag\hat{a}_1 }{2i},\\
\hat{S}_z &= \frac{\hat{a}_0^\dag\hat{a}_0 -\hat{g}^\dag\hat{g}}{2},
\hspace{2.5em}\hat{J}_z &= \frac{\hat{a}_1^\dag\hat{a}_1 -\hat{a}_{-1}^\dag\hat{a}_{-1}}{2},
\end{alignedat}
\end{equation}
and $\hat{A}_i$ just as $\hat{S}_i$ with $\hat{g}$ replaced by $\hat{h}$. Thus $\mathbf{\hat S}\equiv (\hat{S}_1,\hat S_2,\hat S_3)$ generates rotations within the two-level system composed of the modes $(\hat{a}_0,\hat{g})$, $\mathbf{\hat A}\equiv (\hat A_1,\hat A_2,\hat A_3)$ corresponds to $(\hat{a}_0,\hat{h})$, and $\mathbf{\hat J}\equiv (\hat J_x,\hat J_y, \hat J_z)$ to $(\hat{a}_1,\hat{a}_{-1})$. Figure~\ref{fig:1} displays, across the three quantum phases P, BA, and TF, $F_Q/N$ corresponding to the eigenvectors of $\mathbf{\Gamma_{\hat G}}$ which provide $F_Q>N$. Large values of the QFI are observed in two cases. 
First, in the TF phase,
\begin{equation} \label{QFITF}
F_Q\Big[|\!\operatorname{TF}\rangle,\,\hat{R}_\mathrm{opt}^{(\mathrm{TF})}\Big]=N(N+2)/2 
\end{equation}
where $\hat{R}_\mathrm{opt}^{(\mathrm{TF})}$ is given by an arbitrary linear combination of $\hat{J}_x$  and $\hat{J}_y$.
Second, at the center of the BA phase (i.\,e., for $q=0$, we indicate as $|\!\operatorname{CBA}\rangle$
the corresponding ground state) we have 
\begin{equation} \label{QFICBA}
F_Q\left[|\!\operatorname{CBA}\rangle,\, \hat{R}_\mathrm{opt}^{(\mathrm{CBA})}\right] = N(N+1)/2,
\end{equation}
where $\hat{R}_\mathrm{opt}^{(\mathrm{CBA})}$ is an arbitrary linear combination of 
$\hat{S}_x$ and $\hat{A}_y$.
As we show in Appendix~\ref{app:GS_q0}, the state has an explicit expression given by
\begin{equation} 
|\!\operatorname{CBA}\rangle\equiv \sqrt{\frac{2^N (N!)^3}{(2N)!}}\sum_{k=0}^{\lfloor N/2\rfloor }\frac{1}{2^kk!\sqrt{(N-2k)!}}|k\rangle.
\label{eq:explicitGS}
\end{equation}
Hence both $|\!\operatorname{CBA}\rangle$ and $|\!\operatorname{TF}\rangle$ present approximately equal QFI
and a Heisenberg scaling $F_Q \propto N^2$. 
It is well known that the ground state of the TF phase approaches a TF state~\cite{ZhangPRL2013} 
and that the latter exhibits a QFI with Heisenberg scaling with $N$~\cite{HollandPRL1993, PezzeRMP, LuckeSCIENCE2011}.
For an analysis of the  QFI of the ground state of an antiferromagnetic spin-1 BEC see \cite{WuPRA2016}.
Conversely, Eq.~(\ref{QFICBA}) is a novel and less evident result. 
To gain some intuition regarding the large amount of useful entanglement found in the BA phase at $q=0$, let us rewrite the Hamiltonian~(\ref{eq:Hamiltonian}) in 
terms of the $\mathbf{\hat{S}}$ and $\mathbf{\hat{A}}$ operator manifolds of Eq.~(\ref{SAoperators}). 
We obtain, up to c-numbers,
\begin{equation} \label{eq:HamiltonianSA}
\hat H = 2\left(\lambda \hat S_x^2 - \frac{q}{3}\hat S_z\right) + 2\left(\lambda \hat A_y^2 - \frac{q}{3}\hat A_z\right),
\end{equation}
which is a sum of two (non-commuting) Lipkin-Meshkov-Glick Hamiltonians for $\mathbf{\hat S}$ and $\mathbf{\hat A}$, respectively. 
Since $\lambda<0$, the ground state of the first term in Eq.~(\ref{eq:HamiltonianSA}), 
$2 (\lambda \hat S_x^2 - \frac{q}{3}\hat S_z )$, at $q=0$ is a NOON state aligned along the $S_x$-axis
(i.\,e. a superposition of the maximum and the minimum eigenstates of $\hat{S}_x$). 
Its QFI saturates the Heisenberg limit for rotations generated by $\hat S_x$. 
Similarly, the ground state of the second term of Eq.~(\ref{eq:HamiltonianSA}), at $q=0$, is a NOON state aligned along the $A_y$-axis. This hints at large amounts of entanglement in the CBA state. However, since the symmetric and antisymmetric spin algebras share the same central mode $\hat{a}_0$ and therefore do not commute with each other, a more detailed inspection of the ground state is required. To this end, let us trace out the 
$\hat{h}$ mode. This leaves us with the state
\begin{equation}
\hat{\rho} = \sum_{N_h = 0}^N P(N_h) \vert \phi_{N_h} \rangle \langle \phi_{N_h} \vert, 
\end{equation}
in the modes $(\hat{a}_0,\hat{g})$, where $P(N_h)$ is the probability to measure $N_h$ particles in the $\hat{h}$ mode, 
and  $\vert \phi_{N_h} \rangle$ is a state of $N - N_h$ particles in $(\hat{a}_0,\hat{g})$.
In Figure~\ref{fig:2}(a) we plot $P(N_h)$ as a function of $N_h$.
The most probable value of $N_h$ is $N_h=0$, and $P(N_h)=0$ for odd values of $N_h$.
Since $\hat{S}_x$ commutes with $\hat N_h\equiv \hat h^\dagger \hat h$ and $\langle \operatorname{CBA}\!|\hat S_x|\!\operatorname{CBA}\rangle=0$, the QFI decomposes according~to
\begin{equation}
F_Q \big[|\!\operatorname{CBA}\rangle, \hat{S}_x\big] = \sum_{N_h=0}^N  P(N_h) F_Q \big[|\phi_{N_h}\rangle, \hat{S}_x\big].
\end{equation}
In Figure~\ref{fig:2}(b) we show $F_Q [|\phi_{N_h}\rangle, \hat{S}_x]$ as a function of $N_h$. 
Large values of the QFI are observed up to $N_h \simeq N/2$, in accordance with the presence of 
macroscopic superposition states~\cite{shortversion}. As can be seen from the Husimi distributions in Figure~\ref{fig:2}(c), for $N_h>N/2$ the $|\phi_{N_h}\rangle$ resemble NOON states along $\hat S_x$. 
This explains the Heisenberg scaling of the QFI~(\ref{QFICBA}).

%%%%%%%%%%%%%%
% figure 2
%%%%%%%%%%%%%%
\begin{figure}[t]
	\begin{center}
	\includegraphics[width=\columnwidth]{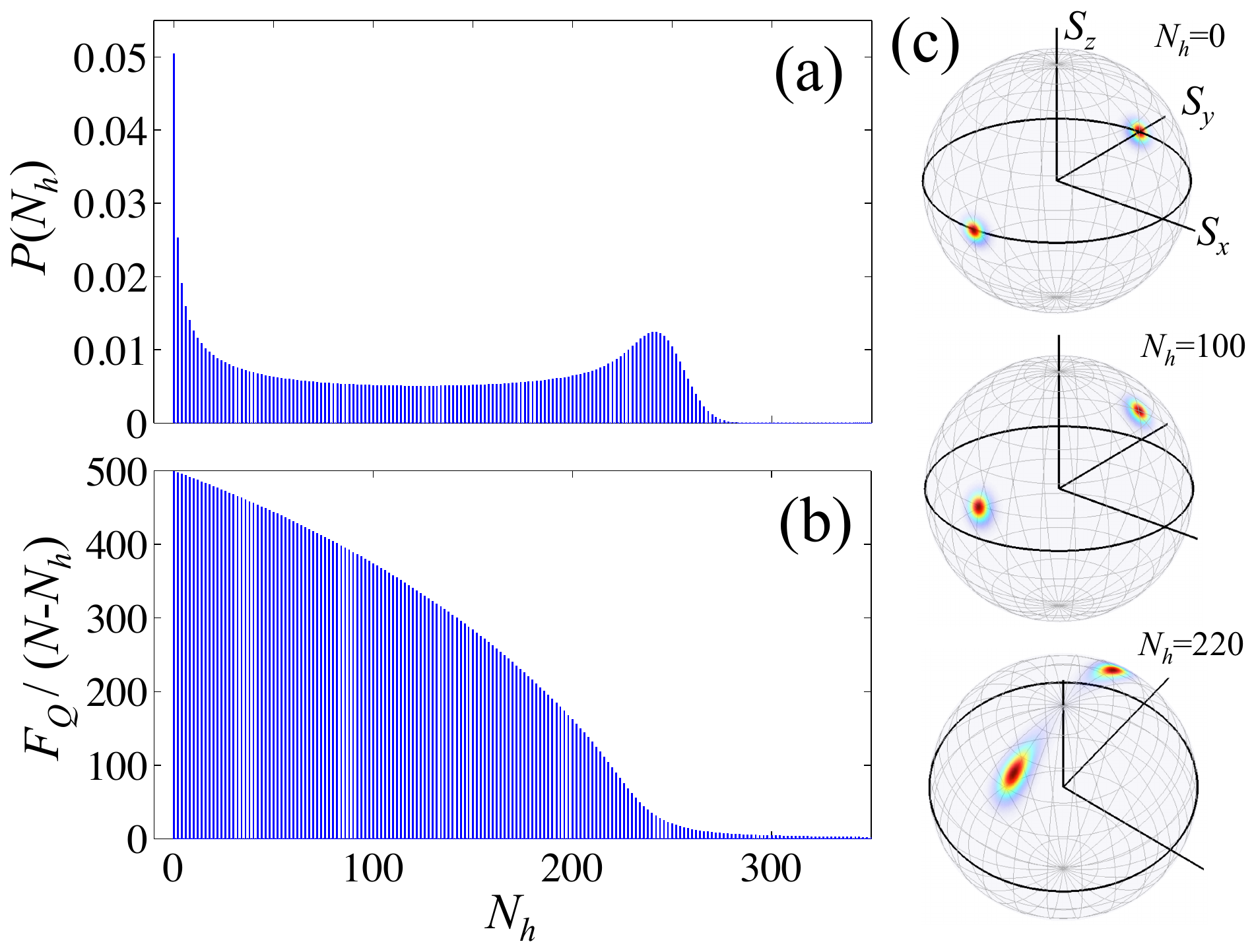}
	\end{center}
	\caption{(Color online) (a) Probability $P(N_h)$ to find $N_h$ particles in the $h$ 
	mode of $|\!\operatorname{CBA}\rangle$. 
	(b) QFI $F_Q [|\phi_{N_h}\rangle, \hat{S}_x ]$ of the states $|\phi_{N_h}\rangle$ 
	obtained by projecting $|\!\operatorname{CBA}\rangle$
	onto a fixed number of particles $N_h$ in the $h$ mode.
	(c) Husimi distributions of $|\phi_{N_h}\rangle$ for different values of $N_h$ showing 
	the presence of NOON-like states in a very broad range of $N_h$ values. In this Figure $N=500$.}
	\label{fig:2}
\end{figure}
%%%%%%%%%%%%%%
%%%%%%%%%%%%%%
%%%%%%%%%%%%%%

We note that the $\mathbf{\hat{S}}$ manifold can be manipulated experimentally by radiofrequency pulses coupling the $m_f=0$ to the $m_f=\pm 1$ modes.
An atomic clock using transformations in the $\mathbf{\hat{S}}$ manifold of a spin-1 BEC has been demonstrated in Ref.~\cite{KrusePRL2016}, 
see also \cite{GrossNATURE2011, HamleyNATPHYS2012, PeiseNATCOMM2015} for squeezing of the $\mathbf{\hat{S}}$ spin.
Our results thus reveal the possibility to attain a sensitivity close to the HL preparing the spin-1 BEC in its ground state at $q=0$.
Since, when starting with the $m_f=0$ BEC, $q=0$ is reached after an adiabatic variation of $q$ that is half as large as the one required to arrive in the TF regime, implementing $|\!\operatorname{CBA}\rangle$ is less demanding in terms of BEC stability than the experiment reported in \cite{LuoSCIENCE2017}. 

Finally, in Appendix~\ref{app:GS_optMeas} we prove that a measurement of $(\hat{N}_+,\,\hat{N}_-)$ is, at any $\theta$, optimal for both $|\!\operatorname{CBA}\rangle$ and $|\!\operatorname{TF}\rangle$.
Optimal interferometric transformations $R_\mathrm{opt}^{(\operatorname{TF})}$ leave the TF state in the $N_0=0$ subspace, thus rendering $(\hat{N}_+,\,\hat{N}_-)$ equivalent to $\hat{D}$. 
A similar argument, see Appendix~\ref{app:GS_optMeas}, applies to $|\!\operatorname{CBA}\rangle$.  
Hence for both states and any phase~$\theta$ the experimentally relevant measurement of $\hat{D}$ turns out to be optimal. 

%%%%%%%%%%%%%%%%%%%%%%%%%%%%%%%%%%%%%%%%%%%%%%%%%%%%%%%%%%%%%%%

\section{Quasi-Adiabatic state preparation}

%%%%%%%%%%%%%%
% figure 3
%%%%%%%%%%%%%%
\begin{figure}[t!]
	\begin{center}
		\includegraphics[width=\columnwidth]{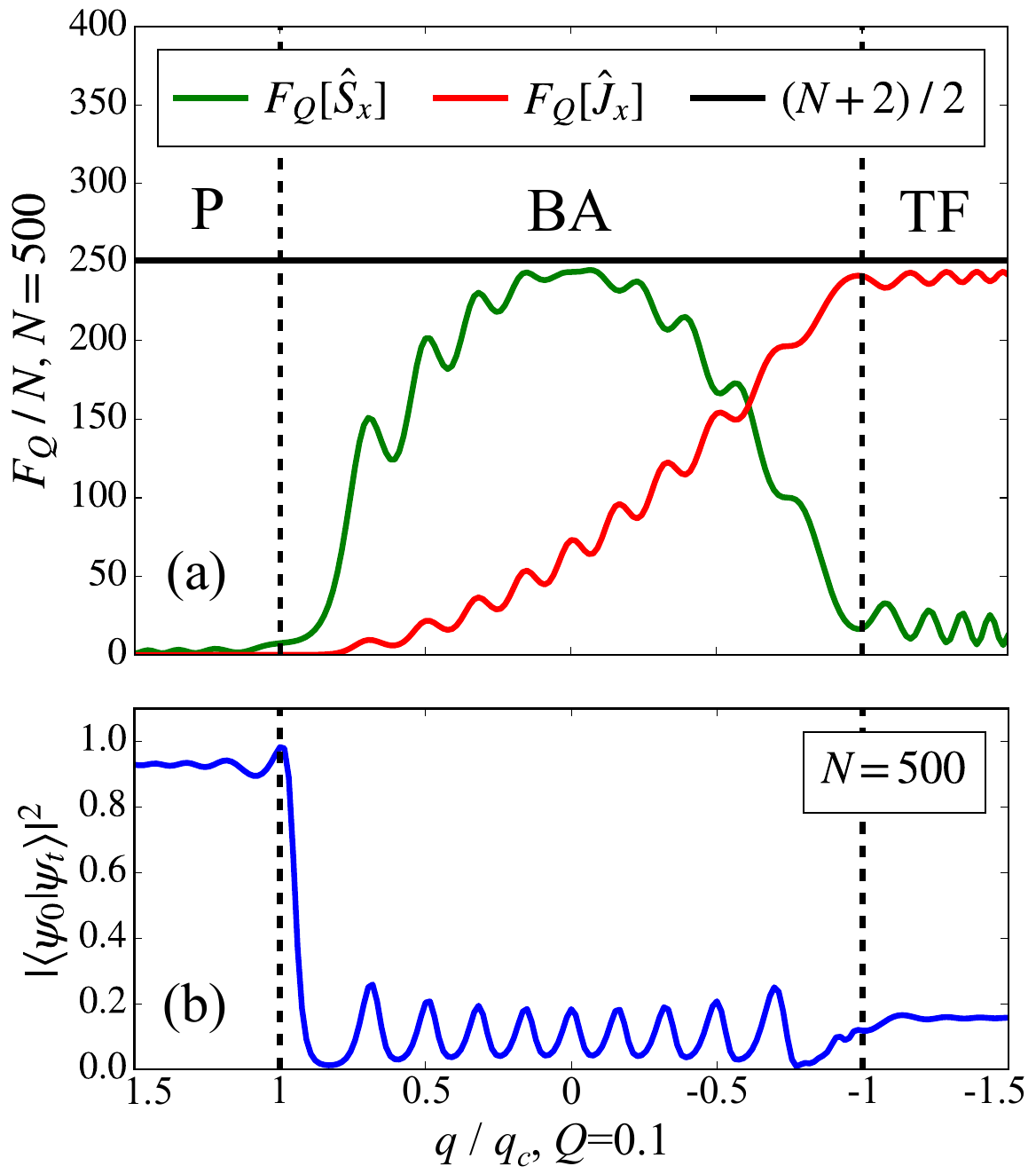}
	\end{center}
	\caption{(Color online) Quasi-adiabatic evolution of a ferromagnetic spin-1 BEC initialized in $m_f=0$. 
	The quadratic Zeeman shift $q(t)$ is linearly ramped with a slope $\propto Q$. 
	The quantum phase transitions are indicated by dashed vertical lines. (a) $F_Q/N$, where the 
	interferometric transformation is generated by $\hat{S}_x$~(green) and $\hat J_x$~(red), which 
	are optimal for $|\!\operatorname{CBA}\rangle$ and $|\!\operatorname{TF}\rangle$, respectively. 
	The solid horizontal line is $F_Q/N=(N+2)/2$, for comparison. 
	(b) Overlap $|\langle\psi(t)|\psi_0(q(t))\rangle|^2$ of the time evolved state 
	$|\psi(t)\rangle$ with the ground state at $q(t)$.}
	\label{fig:3}
\end{figure}
%%%%%%%%%%%%%%
%%%%%%%%%%%%%%
%%%%%%%%%%%%%%

%%%%%%%%%%%%%%
% figure 4
%%%%%%%%%%%%%%
\begin{figure}[t]
	\begin{center}
		 \includegraphics[width=\columnwidth]{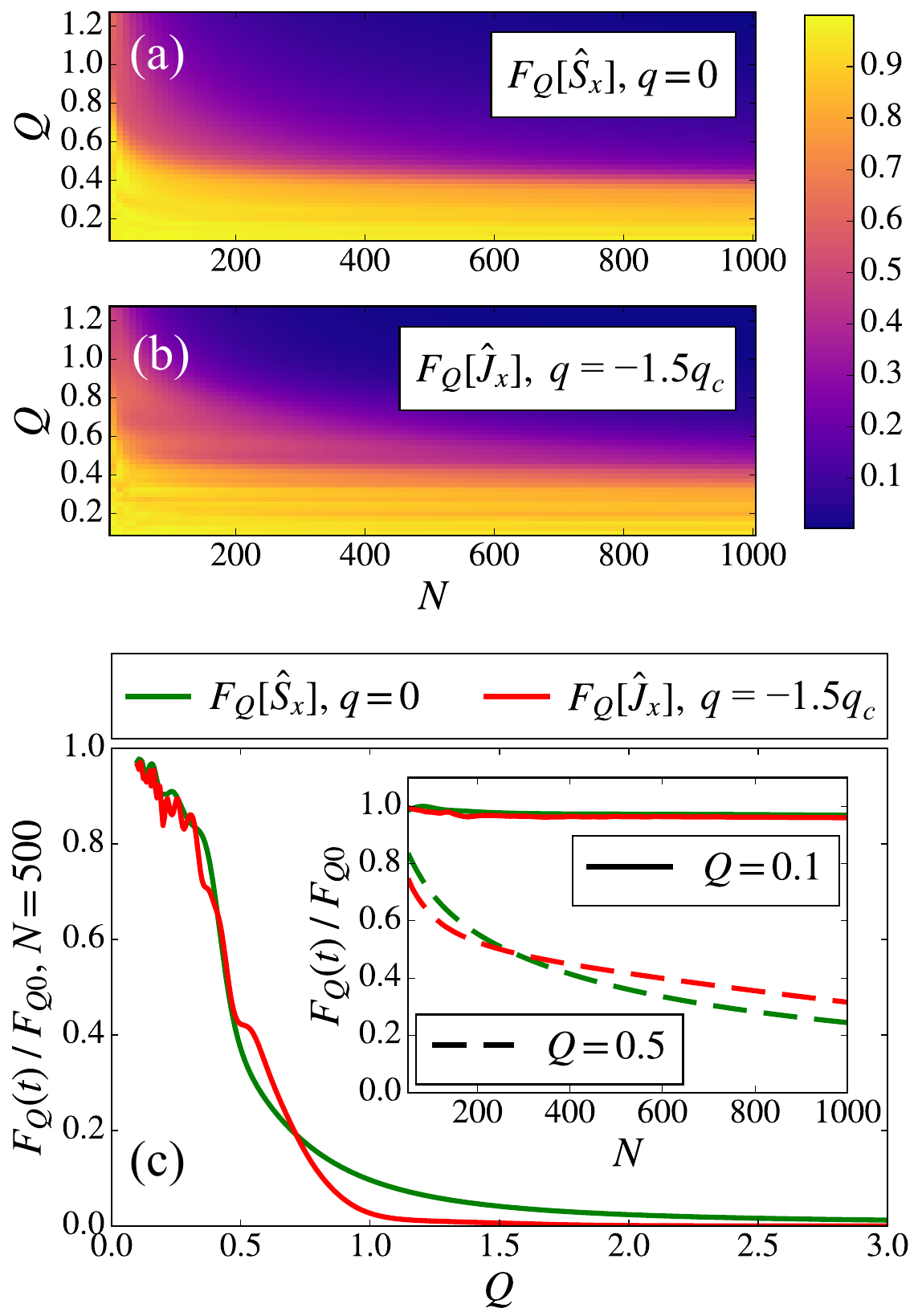}
	\end{center}
	\caption{(Color online) QFI of the states obtained via quasi-adiabatic evolution up to $q=0$ [panel (a)] and $q=-1.5 q_c$ [panel (b)] 
	normalized to the QFI of the respective ground states $|\!\operatorname{CBA}\rangle$ and $|\!\operatorname{TF}\rangle$, 
	indicated as $F_{Q0}$. 
	Panel (c) shows the normalized QFI as a function of $Q$. The green line corresponds to $q=0$, the red line to $q = -1.5 q_c$.
	The inset shows the normalized QFI as a function of $N$.
	In all panels, the interferometric transformations are generated by $\hat{S}_x$ and $\hat{J}_x$ 
	as is optimal for $|\!\operatorname{CBA}\rangle$ and $|\!\operatorname{TF}\rangle$, respectively.}
	\label{fig:4}
\end{figure}
%%%%%%%%%%%%%%
%%%%%%%%%%%%%%
%%%%%%%%%%%%%%

Next we consider an experimental sequence for the variation of $q(t)$ as the one recently discussed in Ref.~\cite{LuoSCIENCE2017}. 
We assume a BEC prepared at $q/q_c=1.5$ in the P phase where $|\psi(t=0)\rangle=|k=0\rangle$. 
The value of the quadratic Zeeman term is varied following the ramp 
$q(t)/q_c=1.5 - q_c  Q t/4$, where $Q>0$ characterizes the non-adiabaticity of the process.
Figure~\ref{fig:3} illustrates our observations for $N=500$ particles and $Q=\num{0.1}$. 
We find that the QFI is hardly affected by the finite ramping speed. This is particularly striking since, as demonstrated 
in Figure~\ref{fig:3}(b), the fidelity $|\langle\psi(t)|\psi_0(q(t))\rangle|^2$ with the respective ground state $|\psi_0\rangle$ is dramatically diminished. 
The oscillations present in both Figure~\ref{fig:3}(a) and (b) resemble the ones found in 
Ref.~\cite{LuoSCIENCE2017} for the conversion efficiency $\langle \hat{N}_++\hat{N}_-\rangle/N$.

Note that at the critical points the energy gap between the ground and first excited state 
scales $\propto N^{-1/3}$~\cite{ZhangPRL2013}, and that a larger $Q$ means that the phase boundaries are crossed more rapidly.
Hence, enlarging $N$ or $Q$ displaces $|\psi(t)\rangle$ further away from the respective ground state, creating a larger number of excitations. 
Figure~\ref{fig:4} shows which fraction of the QFI for $|\!\operatorname{CBA}\rangle$ and $|\!\operatorname{TF}\rangle$ is accessible within finite time. As expected, it decreases with both $N$ and $Q$. Note that we vary $N$ at constant $q_c$. In Figure~\ref{fig:4}(b) we show slices through $F_Q(N,Q)$ at fixed $N$ or $Q$, respectively. The wavy distortions are due to the mentioned oscillations in the QFI, whose frequency depends both on $Q$ and---less pronounced---on $N$. We find large parts of the QFI conserved during non-adiabatic evolutions with $Q\lesssim 0.5$, and an overall rather small dependence on $N$. These features significantly ease experiments. Note particularly that at constant $Q$ the overall ramping time scales linearly with $N$, while the factual (dimensionful) speed of the linear ramp goes even as $\mathrm{d}q/\mathrm{d}t\propto N^2$. Together with Figure~\ref{fig:4}(a) this implies that enlarging the particle number reduces the requirements on adiabaticity and BEC stability. 
A pronounced dependence on whether the ramp of $q(t)$ is terminated at $q=0$ (CBA) or $q=-1.5 q_c$ (TF) 
is not discernible. This is consistent with the numerical analysis in~\cite{LuoSCIENCE2017} showing 
that the second phase transition---in contrast to the first one---has but little impact on the amount of created excitations.

%%%%%%%%%%%%%%%%%%%%%%%%%%%%%%%%%%%%%%%%%%%%%%%%%%%%%%%%%%%%%%%

\section{Finite measurement precision}

%%%%%%%%%%%%%%
% figure 5
%%%%%%%%%%%%%%
\begin{figure}[t]
	\begin{center}
		\includegraphics[width=\columnwidth]{./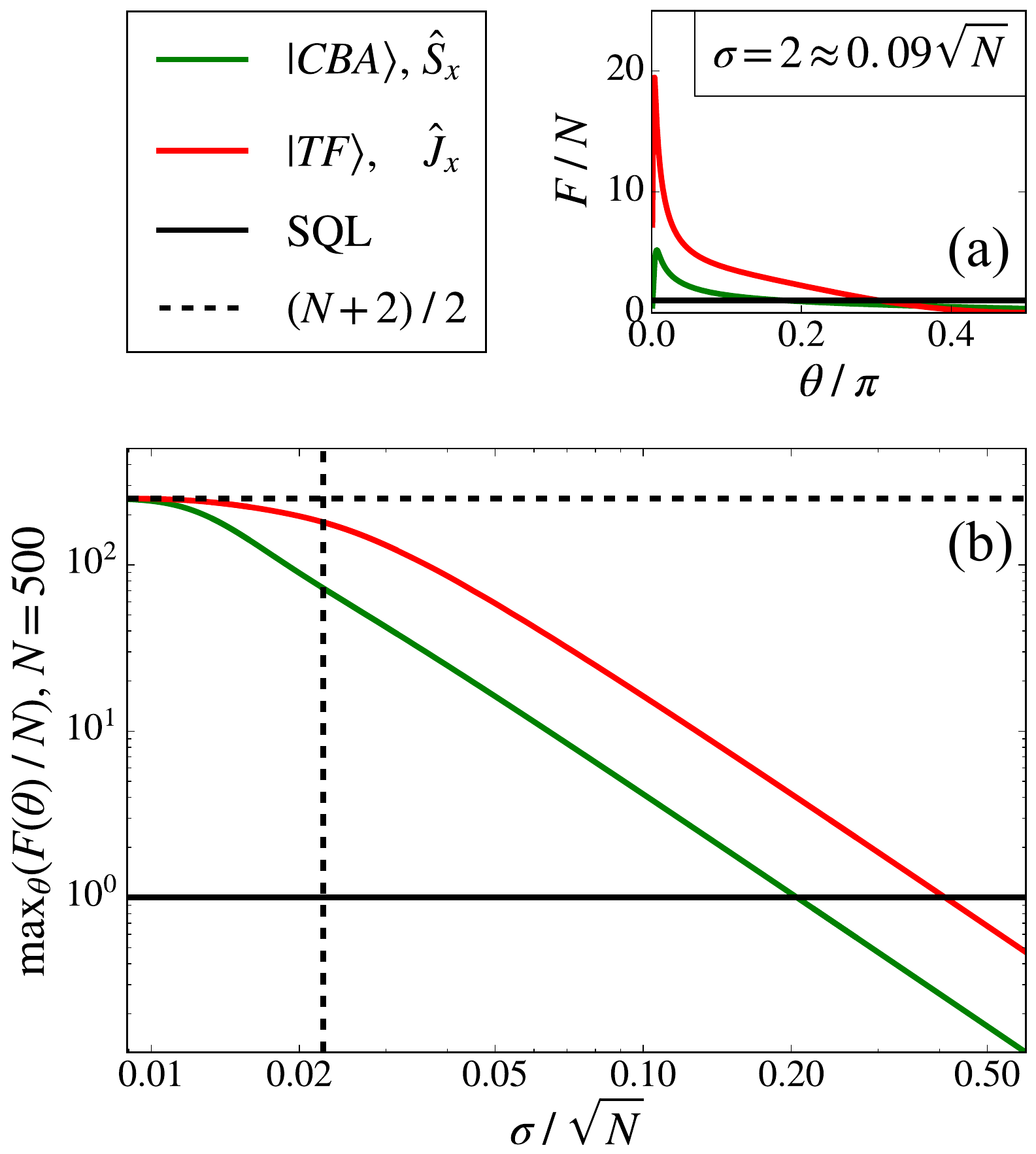}
	\end{center}
	\caption{(Color online) FI in the presence of Gaussian measurement uncertainty with variance 
	$\sigma^2$ for both $N_\pm$. (a) Dependence of the FI on the interferometric phase~$\theta$. 
	(b) Peak value of the FI as a function of $\sigma$. The dashed vertical line indicates measurement noise with a spread corresponding to a single particle ($\sigma=1/2$).
	In this Figure $N=500$.
	}
	\label{fig:5}
\end{figure}
%%%%%%%%%%%%%%
%%%%%%%%%%%%%%
%%%%%%%%%%%%%%

To investigate the impact of a finite measurement resolution, we assume that the detection of both $\hat N_\pm$ is affected by Gaussian noise with variance $\sigma^2$, leading to an imprecise measurement of the number of particles. In this case, the actual measurement probabilities ensue from a convolution of the ideal quantum theoretical result with a Gaussian probability distribution $\Gamma_{\sqrt{2}\sigma}(x)$ of variance $2\sigma^2$ and zero mean. We thus determine the classical FI from the effective probability distribution
\begin{equation}
P_\mathrm{eff}(D|\theta)=\sum_{D'=-N}^N\! \Gamma_{\sqrt{2}\sigma}(|D-D'|)P(D'|\theta),
\end{equation}
where $P(D'|\theta)$ is the noiseless probability to find $D'$ upon measuring $\hat{D}$ at a phase $\theta$. Figure~\ref{fig:5} illustrates how the FI is affected by the detection uncertainty. Panel (a) shows for both $|\!\operatorname{CBA}\rangle$ and $|\!\operatorname{TF}\rangle$ that, while $F(\theta)$ as a whole is strongly damped, pronounced maxima at small $\theta$ remain far above the SQL, in close analogy to experiments presented in \cite{LuckeSCIENCE2011}. From panel (b) we infer that for worse than single particle detection $(\sigma=1/2)$ these peak values of the FI decrease approximately $\propto \sigma^{-2}$. Evaluating the relative standard deviation $\sigma_{\max}(N)/\sqrt{N}$ up to which the FI yet exceeds the SQL we have found that the TF state is slightly less sensitive to particle counting noise than the CBA state: $\sigma_{\max}[|\!\operatorname{TF}\rangle]/\sqrt{N} \approx \num{0.4}$ while $\sigma_{\max}[|\!\operatorname{CBA}\rangle]/\sqrt{N} \approx \num{0.2}$. Both $\sigma_{\max}/\sqrt{N}$ are easily undercut by state-of-the-art experiments \cite{LuckeSCIENCE2011}. 

%%%%%%%%%%%%%%%%%%%%%%%%%%%%%%%%%%%%%%%%%%%%%%%%%%%%%%%%%%%%%%%

\section{Parametric amplification}
\label{sec:PA}

% figure 6

\begin{figure}[t]
	\begin{center}
		\includegraphics[width=\columnwidth]{./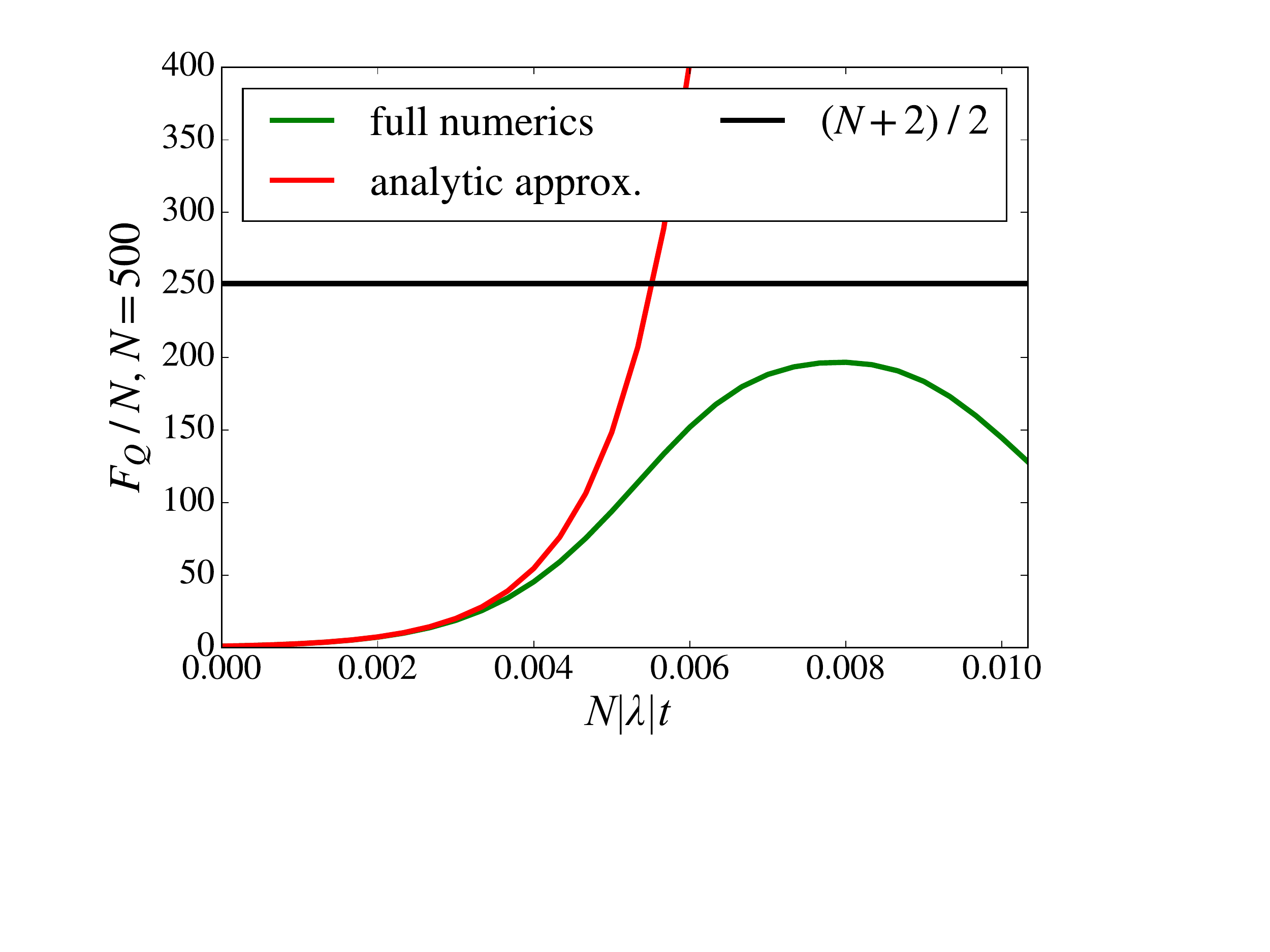}
	\end{center}
	\caption{(Color online) QFI attainable by a quench of the quadratic Zeeman shift to its resonance value $q_r$ and subsequent free evolution of the $m_f=0$ BEC. In the experimentally relevant regime~\cite{Oberthaler2017}, the analytic (red) approximation~\eqref{eq:QFI_PA} for small $t$ coincides with the full numeric simulation (green) of Hamiltonian~\eqref{eq:Hamiltonian}. Even at times significantly exceeding current technical feasibility the QFI fails to reach the level of the TF state indicated by the threshold (black).}
	\label{fig:6}
\end{figure}

Finally, we compare the \mbox{(quasi-)}adiabatic state preparation with the 
dynamical generation of entanglement following a quench of $q$.
Such a quench may render the initial $m_f=0$ condensate dynamically unstable. 
Spin-changing collisions populate $m_f=\pm 1$, thereby 
generating entanglement~\cite{UedaPR2012, DuanPRL2000, PuPRL2000, DuanPRA2002, GabbrielliPRL2015, SzigetiPRL2017}. 

Assuming $\langle \hat{N}_\pm\rangle \ll N$, in line with experiments \cite{KrusePRL2016, PeiseNATCOMM2015, HamleyNATPHYS2012, Oberthaler2017}, we may approximate $N_0\simeq N$ and $\hat{a}_0^\dagger,\,\hat{a}_0 \simeq \sqrt{N_0}$, which simplifies the Hamiltonian~\eqref{eq:Hamiltonian} to $\hat H=\hat H_g+\hat H_h$ with 
$\hat H_g=\alpha \hat g^\dag \hat g+ \beta(\hat g^2+\hat g^{\dagger 2})/2$ and 
$\hat H_h=\alpha \hat h^\dag \hat h- \beta (\hat h^2+\hat h^{\dagger 2})/2$, 
where $\alpha = q + \lambda (N-1/2)$ and $\beta=N\lambda$. 

As before, the generated entanglement can be used for interferometric transformations in the symmetric and anti-symmetric subspaces. The corresponding QFI of the state $|\psi(t)\rangle$ obtained by free evolution after a quench at $t=0$ reaches  
\begin{equation}
\frac{1}{N}F_Q[|\psi(t)\rangle,\,\hat R^{(g/h)}_\mathrm{opt}]\\
=1+2\left(\langle \hat{N}_\pm\rangle+\Delta \hat{N}_\pm\right)
\label{eq:QFI_PA}
\end{equation}
with 
$(\Delta \hat{N}_\pm)^2=\langle\hat{N}_\pm\rangle (\langle\hat{N}_\pm\rangle+1 )$, 
$\langle \hat{N}_\pm\rangle = (N \lambda)^2\tau^2\sinh^2(t/\tau)$, and $\tau = \left(\beta^2-\alpha^2\right)^{-1/2}$, where $\hat R^{(g/h)}_\mathrm{opt}$ is the respective generator which maximizes the QFI. This expression is valid only for short times, as long as the assumption $\langle \hat{N}_\pm\rangle \ll N$ holds. The explicit form of $\hat R^{(g/h)}_\mathrm{opt}$ and a derivation of Eq.~\eqref{eq:QFI_PA} are presented in Appendix~\ref{app:PA}.
Tuning $q$ allows to arbitrarily choose $\alpha$. 
At $\alpha=0$ the Hamiltonian is reduced to spin-changing collisions only. As expected, 
this affords the strongest growth of $\langle\hat{N}_\pm\rangle$ and thus of $F_Q$, 
entailing the definition of the resonance value $q_{r}=-\lambda(N-1/2)$. 
Applying Eq.~\eqref{eq:QFI_PA} to 
recent spin squeezing experiments \cite{KrusePRL2016, PeiseNATCOMM2015, HamleyNATPHYS2012} provides a relative QFI, $F_Q/N$, which ranges from \num{5} to \num{20}, thereby corresponding to a $F_Q/N^2$ of less than \num{2e-2} only. Recall that, in the ideal case, quasi-adiabatic entanglement generation as discussed in this paper allows for $F_Q/N^2\approx0.5$.

In the absence of technical noise it would be advantageous to extend parametric amplification protocols to longer evolution times.
We therefore numerically simulate the anticipated further evolution of the QFI under the full Hamiltonian~\eqref{eq:Hamiltonian} at $q=q_r$. Figure~\ref{fig:6} indicates that even under ideal experimental conditions parametric amplification is unable to reach the QFI attainable by the quasi-adiabatic approach.
In the best scenario, dynamical spin-changing collision creates entangled states with a QFI $F_Q/N^2\approx 0.257$ \cite{WuPRA2016}.

%%%%%%%%%%%%%%%%%%%%%%%%%%%%%%%%%%%%%%%%%%%%%%%%%%%%%%%%%%%%%%%

\section{Conclusions}
\label{sec:Conclusions}

We have studied the generation of entanglement useful for quantum enhanced interferometry with ferromagnetic spin-$1$ BECs, focusing 
on the experimentally relevant case of quasi-adiabatic driving through quantum phase transitions. We have shown that, starting out from a BEC in $m_f=0$ and a quadratic Zeeman shift of $q> q_c$, at $q=0$ and thus halfway to the TF state another highly entangled state, 
$|\!\operatorname{CBA}\rangle$, of equal interferometric value (approximately equal QFI) emerges. This allows us to propose an alternative interferometric scheme admitting Heisenberg scaling.

For both $|\mathrm{TF}\rangle$ and $|\!\operatorname{CBA}\rangle$, optimal values of the QFI are obtained with interferometric transformations corresponding to common radio-frequency coupling techniques. The optimal measurement procedure is based on the well established counting of particles in the $m_f=\pm 1$ modes. According to our findings surpassing the standard quantum limit is expected to remain feasible under realistic conditions, when the quasi-adiabatic transition is performed at finite speed, and measurement uncertainty is present. While the TF state is less sensitive to imperfections of atom counting, the CBA state has the advantage of being quasi-adiabatically reachable within half the time. Both regimes favorably compare to squeezing through parametric amplification, thus constituting a promising source of interferometrically useful entanglement.

%%%%%%%%%%%%%%%%%%%%%%%%%%%%%%%%%%%%%%%%%%%%%%%%%%%%%%%%%%%%%%%

\begin{acknowledgments}
We acknowledge support by the SFB 1227 ``DQ-mat'', projects A02 and B01, of the 
German Research Foundation (DFG). M.\,G. thanks the Alexander von Humboldt foundation for support.
\end{acknowledgments}

%%%%%%%%%%%%%%%%%%%%%%%%%%%%%%%%%%%%%%%%%%%%%%%%%%%%%%%%%%%%%%%

\appendix

\section{Adiabatic phase transition}
\label{app:GS}

\subsection{Gell-Mann Covariance Matrix}
\label{app:GS_Cov}

The collective Gell-Mann operators read
\begin{equation}
\begin{alignedat}{3}
\hat{G}_1 &= \frac{\hat{a}_{-1}^\dagger \hat{a}_0 + \hat{a}_0^\dagger \hat{a}_{-1}}{2},\hspace{1.5em}
&\hat{G}_2 &= \frac{\hat{a}_{-1}^\dagger \hat{a}_0 - \hat{a}_0^\dagger \hat{a}_{-1}}{2i},\\
\hat{G}_3 &=\frac{\hat{a}_{-1}^\dagger \hat{a}_{-1}-\hat{a}_0^\dagger \hat{a}_0}{2},\hspace{1.5em}
&\hat{G}_4 &=\frac{\hat{a}_1^\dagger \hat{a}_{-1}+\hat{a}_{-1}^\dagger \hat{a}_1}{2}, \\
\hat{G}_5 &= \frac{\hat{a}_{-1}^\dagger \hat{a}_1-\hat{a}_1^\dagger \hat{a}_{-1}}{2i},\hspace{1.5em}
&\hat{G}_6 &=\frac{\hat{a}_0^\dagger \hat{a}_1+\hat{a}_1^\dagger \hat{a}_0}{2},\\
\hat{G}_7 &=\frac{\hat{a}_0^\dagger \hat{a}_1-\hat{a}_1^\dagger \hat{a}_0}{2i},\hspace{1.5em}
&\hat{G}_8 &=\frac{\hat{a}_{-1}^\dagger \hat{a}_{-1}+\hat{a}_0^\dagger \hat{a}_0 - 2\hat{a}_1^\dagger \hat{a}_1}{2\sqrt{3}}.
\end{alignedat}
\end{equation}
Consider an arbitrary (normalized) state $|\psi\rangle=\sum_k c_k |k\rangle$ with $\hat{D}|\psi\rangle=0$. The covariance matrix of $\hat{\mathbf{G}}$ is block diagonal,
\begin{align}
\boldsymbol{\Gamma}_{\hat{\mathbf{G}}}&=\boldsymbol{\Gamma}_{(\hat G_1, \hat G_2, \hat G_6, \hat G_7)}\oplus \boldsymbol{\Gamma}_{(\hat G_3,\hat G_8)}\oplus (\Delta\hat{G}_4)^2\oplus (\Delta\hat{G}_5)^2
\end{align}
with $(\Delta \hat G_i)^2$ the variance of $\hat G_i$.
The twofold degenerate eigenvalues of $\boldsymbol{\Gamma}_{(\hat G_1, \hat G_2, \hat G_6, \hat G_7)}$ are
\begin{equation}
\lambda_\pm= A \pm |B|
\end{equation}
with
\begin{align}
\begin{split}
A &= \frac{1}{4}\left(N+\sum_{k=0}^{\lfloor N/2\rfloor }|c_k|^2k(2N-4k-1)\right),\\
B &= \frac{1}{2}\sum_{k=0}^{\lfloor N/2-1 \rfloor}c_k^*c_{k+1}(k+1)\sqrt{(N-2k)(N-2k-1)}
\end{split}
\end{align}
and corresponding eigenvectors
\begin{equation}
\begin{alignedat}{5}
\mathbf{u}_+^{(1)}&=\frac{1}{\sqrt{2}|B|}(&&\Im (B),&&\Re (B),&0,&&\,|B|),\\
\mathbf{u}_+^{(2)}&=\frac{1}{\sqrt{2}|B|}(&&\Re (B),-&&\Im (B),&\,|B|,&&0),\\
\mathbf{u}_-^{(1)}&=\frac{1}{\sqrt{2}|B|}(-&&\Im (B),-&&\Re (B),&0,&&\,|B|),\\
\mathbf{u}_-^{(2)}&=\frac{1}{\sqrt{2}|B|}(-&&\Re (B),&&\Im (B),&\,|B|,&&0),
\end{alignedat}
\end{equation}
where $\Re$ and $\Im$ denote the real and imaginary part, respectively.
Note that if $c_k\in\mathbb{R}$ for all $k$ we obtain $\Im(B)=0$ and hence $\boldsymbol{\Gamma}_{(\hat G_1, \hat G_2, \hat G_6, \hat G_7)}=\boldsymbol{\Gamma}_{(\hat G_1, \hat G_6)}\oplus \boldsymbol{\Gamma}_{(\hat G_2,\hat G_7)}$. The eigenvalues of $\boldsymbol{\Gamma}_{(\hat G_3,\hat G_8)}$ are
\begin{equation}
\lambda_0=0,\;\;\lambda_1 = 3\left(\sum_{k=0}^{\lfloor N/2\rfloor }|c_k|^2k^2-\left(\sum_{k=0}^{\lfloor N/2\rfloor}|c_k|^2k\right)^2\right).
\end{equation}
The corresponding eigenvectors read
\begin{equation}
\mathbf{u}_0=\left(1,\sqrt{3}\right),\;\;\mathbf{u}_1=\left(\sqrt{3},-1\right).
\end{equation}
Finally,
\begin{equation}
(\Delta \hat G_4)^2=(\Delta\hat G_5)^2=\frac{1}{2}\sum_{k=0}^{\lfloor N/2\rfloor }|c_k|^2k(k+1).
\end{equation}

Numerically evaluating $\boldsymbol{\Gamma}_{\hat{\mathbf{G}}}$ in the ground state $|\psi_0(q)\rangle$ of the Hamiltonian~\eqref{eq:Hamiltonian} we find that only the (two-fold degenerate) eigenvalues $\lambda_+$ and $(\Delta \hat G_4)^2=(\Delta\hat G_5)^2$, depicted in Figure~\ref{fig:1} and Figure~\ref{fig:3}(a), significantly differ from zero for large $N$. They correspond to the eigenoperators $\hat{S}_x=(\hat{G}_{1}+\hat{G}_{6})/\sqrt{2}$, $\hat{A}_y=(\hat{G}_{2}+\hat{G}_{7})/\sqrt{2}$, which become optimal in the BA phase, and $\hat{J}_x=\hat{G}_{4}$, $\hat{J}_y=-\hat{G}_{5}$, prevalent in the TF phase, respectively.

\subsection{Properties of the CBA state}
In this section, we provide some analytical results on the CBA state, i.\,e., the ground state at $q=0$.
\label{app:GS_q0}
\subsubsection{Coefficients in the Fock basis}
Let us introduce the collective pseudospin-$1$ operator $\mathbf{\hat{L}}$ composed of
\begin{equation}
\begin{alignedat}{2}
\hat{L}_x &= \frac{1}{\sqrt{2}}(\hat{a}_0^\dagger\hat{a}_1 + \hat{a}_0^\dagger\hat{a}_{-1} + \hat{a}_1^\dagger\hat{a}_0 + \hat{a}_{-1}^\dagger\hat{a}_0) && =  \hphantom{-}2\hat{S}_x,\\ 
\hat{L}_y &= \frac{1}{i\sqrt{2}}(\hat{a}_0^\dagger\hat{a}_1 -\hat{a}_0^\dagger\hat{a}_{-1} -\hat{a}_1^\dagger\hat{a}_0 + \hat{a}_{-1}^\dagger\hat{a}_0 ) && =  \hphantom{-}2\hat{A}_y,\\
\hat{L}_z &=\hat{a}_{-1}^\dagger\hat{a}_{-1} -\hat{a}_1^\dagger\hat{a}_1 && =  -2\hat{J}_z.
\end{alignedat}
\end{equation}
This allows us to express the ground state of the Hamiltonian~\eqref{eq:Hamiltonian} in the subspace of $D=0$ at $q=0$ as \cite{LawPRL1998}
\begin{align}
\begin{split}
|\!\operatorname{CBA}\rangle &= \frac{1}{\sqrt{(2N)!}}\left(\hat{L}_+\right)^N|N_-=0,\,N_0=0,\,N_+=N\rangle,\\
\hat{L}_+ &= \hat{L}_x + i\hat{L}_y = \sqrt{2}(\hat{a}_{-1}^\dagger \hat{a}_0 + \hat{a}_0^\dagger \hat{a}_1).
\label{eq:algebraicGS}
\end{split}
\end{align}
We expand the operator $\hat{L}_+^N$, leading to
\begin{align}\label{eq:expandedGS}
|\!\operatorname{CBA}\rangle &= \sqrt{\frac{2^N}{(2N)!}}\sum_{k=0}^{\lfloor N/2\rfloor }\mathfrak{C}_k[\hat a_0, \hat a_0^\dagger]\hat a_{-1}^{\dagger k}\hat a_1^{N-k}|0,\,0,\,N\rangle\notag\\
&=\sqrt{\frac{2^NN!}{(2N)!}}\sum_{k=0}^{\lfloor N/2\rfloor}\mathfrak{C}_k[\hat a_0, \hat a_0^\dagger]|k,\,0,\,k\rangle,
\end{align}
where $\mathfrak{C}_k[\hat a, \hat a^\dagger]$ denotes the sum over all possible products composed of $k$ operators $\hat{a}$ and $N-k$ operators $\hat{a}^\dagger$ in arbitrary order, and we have used that $\hat a_{-1}^{\dagger}$ and $\hat{a}_1$ commute with each other as well as with $\hat{a}_0$ and $\hat{a}_0^\dagger$. Each term in $\mathfrak{C}_k[\hat a_0, \hat a^\dagger_0]$ leads to a total creation of $N-2k$ particles in mode $\hat a_0$. Since we apply $\hat L_+^N$ to a state containing zero particles in the central mode, only values of $k\leq N/2$ contribute to Eq.~\eqref{eq:expandedGS}. An explicit evaluation can be performed by means of the following
\begin{lemma}
	For an arbitrary bosonic mode with creation operator $\hat{a}^\dagger$ and vacuum $|0\rangle$
	\begin{align}
	\mathfrak{C}_k[\hat a, \hat a^\dagger]|0\rangle=\frac{N!}{2^k k! (N-2k)!}(\hat{a}^\dagger)^{N-2k}|0\rangle\;\;\;\forall k\leq N/2.
	\end{align}
	\label{lemma:1}
\end{lemma}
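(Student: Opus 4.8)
The plan is to prove Lemma~\ref{lemma:1} by a direct combinatorial bookkeeping of how normal ordering acts on a product of $k$ annihilation operators $\hat a$ and $N-2k$ creation operators $\hat a^\dagger$ (so $N-k$ total $\hat a^\dagger$, but $N-2k$ net after the $k$ annihilations cancel $k$ of them), applied to the vacuum. The object $\mathfrak C_k[\hat a,\hat a^\dagger]$ is the sum over all $\binom{N}{k}$ orderings of $k$ copies of $\hat a$ among $N-k$ copies of $\hat a^\dagger$. Acting on $|0\rangle$, in each such word we may push all $\hat a$'s to the right using $[\hat a,\hat a^\dagger]=1$; every $\hat a$ that ever reaches the vacuum kills it, so the only surviving contributions come from \emph{contractions}: each of the $k$ annihilation operators must be paired (via a commutator) with one of the $\hat a^\dagger$'s standing to its right in the word. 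After all $k$ contractions are done, $N-2k$ creation operators remain, giving $(\hat a^\dagger)^{N-2k}|0\rangle$. So the entire content of the lemma is the claim that the total number of (word, contraction-assignment) pairs equals $\dfrac{N!}{2^k k!(N-2k)!}$.

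First I would set up the counting cleanly. Summing over all words and all admissible ways of contracting each $\hat a$ with a later $\hat a^\dagger$ is the same as summing over all ways of choosing $k$ disjoint \emph{ordered} pairs $(\hat a,\hat a^\dagger)$ together with the positions of the remaining $N-2k$ "free" $\hat a^\dagger$'s along a line of $N$ slots, subject to the constraint that within each pair the $\hat a$ precedes the $\hat a^\dagger$. Equivalently: place $N$ symbols in a row; of these, $N-2k$ are free $\hat a^\dagger$ and the other $2k$ form $k$ matched pairs, each pair consisting of an "open" symbol followed (not necessarily immediately) by its "close" partner. Because the matching is forced to be "open before close" but is otherwise unrestricted (pairs may nest or cross arbitrarily, since a commutator with \emph{any} later creation operator is allowed), the count of such configurations is exactly the number of ways to choose which $2k$ of the $N$ positions carry matched symbols, i.e. $\binom{N}{2k}$, times the number of perfect matchings of $2k$ ordered-by-position points where each pair is automatically oriented — that is just the number of perfect matchings of $2k$ labelled points, $(2k-1)!! = \dfrac{(2k)!}{2^k k!}$. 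Multiplying, $\binom{N}{2k}\cdot\dfrac{(2k)!}{2^k k!} = \dfrac{N!}{(N-2k)!\,2^k k!}$, which is precisely the claimed coefficient.

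The one step that needs care — and the main obstacle — is justifying that "open before close, otherwise unrestricted" is the correct constraint, i.e. that crossing and nesting of contractions genuinely both occur with weight one and nothing is over- or under-counted when one normal-orders a \emph{single} fixed word. Concretely: in a fixed word, Wick's theorem for a product of bosonic $\hat a$'s and $\hat a^\dagger$'s says the vacuum expectation-free reduction is the sum over all ways of contracting annihilators with creators to their right, each contraction contributing a factor $1$, and the uncontracted part being normal-ordered (here, just $(\hat a^\dagger)^{N-2k}$ since all surviving operators are $\hat a^\dagger$). I would either cite this or, to keep the appendix self-contained, prove it by induction on $k$: peel off the leftmost $\hat a$, commute it rightward past each $\hat a^\dagger$ it meets (each such pass splits into a "commutator" branch, which removes that $\hat a^\dagger$ and leaves $k-1$ annihilators in a shorter word, plus a "swap" branch), and when the $\hat a$ reaches the end it annihilates $|0\rangle$ and the term vanishes. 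Collecting the surviving branches over \emph{all} words reproduces exactly the (position-choice)$\times$(matching) count above. An alternative, perhaps cleaner, route that sidesteps Wick entirely: recognize $\sum_{k}\mathfrak C_k[\hat a,\hat a^\dagger]\,z^k$ evaluated on $|0\rangle$ as $(\hat a^\dagger + z\hat a)^{\,?}$-type generating object — more precisely, $\mathfrak C_k$ is the coefficient extraction from $(\hat a^\dagger + \hat a\,t)^{N}$ of the term with $k$ factors $\hat a\,t$, so $\sum_k \mathfrak C_k[\hat a,\hat a^\dagger] t^k |0\rangle = (\hat a^\dagger + t\,\hat a)^N|0\rangle$ — and then use the known disentangling / Hermite-polynomial identity $(\hat a^\dagger + t\hat a)^N|0\rangle = \sum_{k} \dfrac{N!}{2^k k!(N-2k)!}\,t^{2k}\,(\hat a^\dagger)^{N-2k}|0\rangle$ (a rescaled form of the generating function of Hermite polynomials, or equivalently the normal-ordering of a displaced quadrature), matching the $t^{2k}$ coefficient against $t^k$... which forces only even powers to survive and reproduces the formula. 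I would present the induction version as the primary argument since it is elementary and keeps the paper self-contained, and remark on the Hermite-polynomial identity as the conceptual reason.
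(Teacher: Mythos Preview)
Your primary argument is correct and essentially identical to the paper's: both invoke Wick's theorem (or the equivalent commutator-pushing) to reduce the action on the vacuum to a count of full contractions, and both arrive at $X(k)=\binom{N}{2k}\cdot(2k-1)!!=\dfrac{N!}{2^k k!(N-2k)!}$ by first selecting the $2k$ positions that participate in contractions and then counting perfect matchings on them. One small slip in your alternative generating-function route: since $(\hat a^\dagger+t\hat a)^N=\sum_k \mathfrak C_k[\hat a,\hat a^\dagger]\,t^k$, the right-hand side of your displayed identity should carry $t^{k}$, not $t^{2k}$, so there is no ``only even powers survive'' step---the identity is then simply a restatement of the lemma.
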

\begin{proof} 
	Since each term in the sum described by $\mathfrak{C}_k[\hat a, \hat a^\dagger]$ describes the creation of $N-2k$ particles, we can write
	\begin{equation}
	\mathfrak{C}_k[\hat a, \hat a^\dagger]|0\rangle=X(k)(\hat{a}^\dagger)^{N-2k}|0\rangle,
	\end{equation}
	which reduces the problem to the identification of the combinatorial factor $X(k)$.
	We use Wick's theorem \cite{Bogoliubov1959}, $\hat{a}|0\rangle=0$, and the fundamental Wick contractions 
	\begin{equation}
	\begin{alignedat}{2}
	&\mathrlap{\ssmile{1.1}}\hat{a}\,\hat{a}=\mathrlap{\ssmile{1.35}}\hat{a}^\dagger\hat{a}^\dagger=\mathrlap{\ssmile{1.25}}\hat{a}^\dagger\hat{a}&&=0,\\
	&\mathrlap{\ssmile{1.1}}\hat{a}\,\hat{a}^\dagger&&=1
	\end{alignedat}
	\end{equation}
	where $\mathrlap{\ssmile{1.6}}\hat{A}\,\hat{B}\equiv \hat A \hat B - \!:\!\!\hat A \hat B\!\!:$ and the double dots denote normal ordering.
	The contribution of each permutation to $X(k)$ is the number of variants it admits for enclosing all $k$ annihilation operators into $\hat{a}\hat{a}^\dagger$-contractions. Taking into consideration all permutations of $\hat{a},\,\hat{a}^\dagger$ reveals that $X(k)$ is the number of possibilities to tag $k$ unsorted disjoint tuples in a set of $N$ elements. There are $\binom{N}{N-2k}$
	different choices for the positions of the $\hat{a}^\dagger$ which are not going to be contracted. Thus, we merely have to count the number of possibilities to pair $2k$ objects. First arbitrarily arranging them and then compensating for the ordering of and within pairs we arrive at $(2k)!/(2^kk!)$.
	This completes the proof, since
	\begin{equation}
	X(k) = \binom{N}{N-2k}\frac{(2k)!}{2^kk!} = \frac{N!}{2^k k! (N-2k)!}.
	\end{equation}
\end{proof}

Applying Lemma~\ref{lemma:1} to Eq.~(\ref{eq:expandedGS}) gives
\begin{align}\label{eq:CBAfull}
|\!\operatorname{CBA}\rangle &= \sqrt{\frac{2^NN!^3}{(2N)!}}\sum_{k=0}^{\lfloor N/2\rfloor}\frac{1}{2^k k! \sqrt{(N-2k)!}}|k,\,N-2k,\,k\rangle
\end{align}
as reported in Eq.~\eqref{eq:explicitGS}.

\subsubsection{Quantum Fisher information}
As discussed in the main text, see Figure~\ref{fig:1}, and in Appendix~\ref{app:GS_Cov}, the QFI of $|\!\operatorname{CBA}\rangle$ is maximized by any $\hat{R}_\mathrm{opt}\in\operatorname{span}\{\hat S_x,\,\hat A_y\}$. We consider without loss of generality $\hat R_\mathrm{opt}=\hat S_x$. Then
\begin{align}
&F_Q[|\!\operatorname{CBA}\rangle,\,\hat{R}_\mathrm{opt}]=4(\Delta\hat{S}_x)^2\notag\\&=\sum_{k=0}^{\lfloor (N-1)/2\rfloor }(k+1)\left|\sqrt{N-2k}\;c_k + \sqrt{N-2k-1}\;c_{k+1}\right|^2,
\end{align}
where
\begin{equation}\label{eq:ck}
c_k=\sqrt{\frac{2^NN!^3}{(2N)!}}\frac{1}{2^k k! \sqrt{(N-2k)!}}
\end{equation}
are the Fock-state coefficients of $|\!\operatorname{CBA}\rangle$ from Eq.~(\ref{eq:CBAfull}), and $c_{k>N/2}\equiv 0$. Thus
\begin{align}
\begin{split}
&\sqrt{N-2k}\;c_k + \sqrt{N-2k-1}\;c_{k+1}\\&=\sqrt{\frac{2^NN!^3}{(2N)!}}\frac{N+1}{2^{k+1}(k+1)!\sqrt{(N-2k-1)!}},
\end{split}
\end{align}
which, after some rearrangements, leads to
\begin{align}
\begin{split}
&F_Q[|\!\operatorname{CBA}\rangle,\,\hat{R}_\mathrm{opt}]\\
&=\frac{(N+1)!^2}{2(2N)!}\sum_{k=0}^{\lfloor(N-1)/2\rfloor}\!\!\binom{N}{k}\binom{N-k}{k+1}2^{N-2k-1}.
\end{split}
\end{align}
The sum ensues from the following
\begin{lemma}
\begin{equation}
\sum_{k=0}^{\lfloor n/2\rfloor}\binom{m}{k}\binom{m-k}{n-2k}2^{n-2k}=\binom{2m}{n}\;\;\forall n\leq 2m
\label{eq:lemma2}
\end{equation}
\label{lemma:2}
\end{lemma}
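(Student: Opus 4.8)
The identity $\sum_{k=0}^{\lfloor n/2\rfloor}\binom{m}{k}\binom{m-k}{n-2k}2^{n-2k}=\binom{2m}{n}$ is a classical combinatorial/generating-function fact, so I would prove it by a generating-function argument rather than by induction. The plan is to recognize the right-hand side as the coefficient of $x^n$ in $(1+x)^{2m}$ and to show that the left-hand side is the same coefficient after an algebraic rewriting of $(1+x)^{2m}$. First I would write $(1+x)^{2m}=\bigl((1+x)^2\bigr)^m=\bigl(1+2x+x^2\bigr)^m$, and then apply the trinomial expansion
\begin{equation}
(1+2x+x^2)^m=\sum_{k,\,j}\frac{m!}{k!\,j!\,(m-k-j)!}\,(x^2)^k\,(2x)^j\,1^{\,m-k-j},
\end{equation}
where the sum runs over nonnegative integers with $k+j\le m$.

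Next I would extract the coefficient of $x^n$. A term contributes to $x^n$ precisely when $2k+j=n$, i.e. $j=n-2k$, which forces $0\le k\le\lfloor n/2\rfloor$ and also $k+(n-2k)=n-k\le m$ (automatically satisfied in the range where the binomials below are nonzero). Substituting $j=n-2k$, the multinomial coefficient becomes $\dfrac{m!}{k!\,(n-2k)!\,(m-n+k)!}$, and the power of $2$ is $2^{\,n-2k}$. Hence
\begin{equation}
[x^n]\,(1+x)^{2m}=\sum_{k=0}^{\lfloor n/2\rfloor}\frac{m!}{k!\,(n-2k)!\,(m-n+k)!}\,2^{\,n-2k}.
\end{equation}
It then remains to observe that $\dfrac{m!}{k!\,(n-2k)!\,(m-n+k)!}=\binom{m}{k}\binom{m-k}{n-2k}$, which follows by writing $\binom{m}{k}=\dfrac{m!}{k!\,(m-k)!}$ and $\binom{m-k}{n-2k}=\dfrac{(m-k)!}{(n-2k)!\,(m-n+k)!}$ and cancelling the common factor $(m-k)!$. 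Comparing with $[x^n](1+x)^{2m}=\binom{2m}{n}$ yields the claim for every $n\le 2m$.

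The argument is essentially routine once the rewriting $(1+x)^{2m}=(1+2x+x^2)^m$ is in place, so there is no serious obstacle; the only point that needs a little care is bookkeeping of the summation ranges—specifically checking that terms with $k>\lfloor n/2\rfloor$ or $k<\max(0,n-m)$ vanish (the former because the power of $x$ would exceed $n$ or the factorial argument $n-2k$ would be negative, the latter because $\binom{m-k}{n-2k}=0$ when $n-2k>m-k$), so that the trinomial sum and the stated sum over $0\le k\le\lfloor n/2\rfloor$ agree exactly. An alternative, which I would mention only as a remark, is a direct combinatorial proof: $\binom{2m}{n}$ counts $n$-subsets of a set partitioned into $m$ pairs; classifying such a subset by the number $k$ of pairs from which both elements are chosen (and noting the remaining $n-2k$ elements come from $n-2k$ of the remaining $m-k$ pairs, each contributing a factor $2$ for the choice of which element) reproduces the left-hand side term by term.
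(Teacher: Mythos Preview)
Your generating-function proof is correct and complete; the rewriting $(1+x)^{2m}=(1+2x+x^2)^m$ followed by the trinomial expansion and coefficient extraction does exactly what you claim, and your bookkeeping of the summation range is accurate.

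The paper, however, takes a different route: it gives only the direct combinatorial argument that you mention as a closing remark. It interprets $\binom{2m}{n}$ as the number of ways to place $n$ indistinguishable objects on $2m$ sites grouped into $m$ pairs, and identifies the $k$-th summand on the left as the count of those placements that complete exactly $k$ pairs (choose the $k$ full pairs in $\binom{m}{k}$ ways, choose which $n-2k$ of the remaining $m-k$ pairs receive a single object in $\binom{m-k}{n-2k}$ ways, and pick which site of each such pair is occupied in $2^{n-2k}$ ways). So your alternative remark \emph{is} the paper's proof. Your primary argument is more algebraic and arguably more mechanical---no clever bijection is needed once one spots $(1+x)^2=1+2x+x^2$---while the paper's version is shorter and gives a transparent meaning to each factor in the summand. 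Both are standard; neither is deeper than the other.
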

\begin{proof}
Consider $2m$ sites grouped into $m$ pairs. For each $k$ the left-hand side of Eq.~\eqref{eq:lemma2}
is the number of possibilities to distribute $2k + (n - 2k) = n$ indiscernible objects on these $2m$ sites in such a way that exactly $k$ pairs are completed. The number of obtained pairs can assume values between $\max\{0,\,n-m\}$ and $\lfloor n/2 \rfloor$. Note that $k < n-m$ do not contribute to Eq.~\eqref{eq:lemma2}. Thus, summing over all $0\leq k\leq n/2$ amounts to counting the variants of distributing $n$ identical elements on $2m$ sites, which gives $\binom{2m}{n}$.
\end{proof}

Choosing $n=N-1$ and $m=N$ we obtain
\begin{align}
F_Q[|\!\operatorname{CBA}\rangle,\,\hat{R}_\mathrm{opt}]=N(N+1)/2.
\end{align}

\subsubsection{Mean particle number}
We further determine the exact mean particle numbers in the $\hat{a}_{\pm 1}$ modes, using
\begin{align}
\langle \hat{N}_\pm \rangle = \sum_k k|c_k|^2,
\end{align}
with the coefficients of the CBA state defined in Eq.~(\ref{eq:ck}). Applying Lemma~\ref{lemma:2} with $n=N$ and $m=N-1$, we find
\begin{align}
\langle \hat{N}_\pm\rangle &= \frac{N}{4}\cdot\frac{2N-2}{2N-1}\notag\\
&= \frac{N}{4}\left(1-\frac{1}{2N}+\mathcal{O}\left(\frac{1}{N^2}\right)\right).
\end{align}
For $N\gg 1$, we recover the mean field expression $\langle \hat{N}_\pm\rangle =N/4$.

\subsection{Optimal measurements}
\label{app:GS_optMeas}

We consider the two interferometrically relevant states $|\psi\rangle\in\{|\!\operatorname{CBA}\rangle, |\!\operatorname{TF}\rangle\}$ along with the respective optimal generators of the interferometric rotation
\begin{equation}
\begin{alignedat}{2}
&\hat{R}_\mathrm{opt}^{(\operatorname{CBA})}(\phi) &&= \frac{1}{2\sqrt{2}}\left(\operatorname{e}^{-i\phi}\hat{a}_0^\dagger\hat{a}_1 + \operatorname{e}^{i\phi}\hat{a}_0^\dagger\hat{a}_{-1} \right. \\
& &&\hphantom{=}\; \left. + \operatorname{e}^{i\phi}\hat{a}_1^\dagger\hat{a}_0 + \operatorname{e}^{-i\phi}\hat{a}_{-1}^\dagger\hat{a}_0\right),\\
&\hat{R}_\mathrm{opt}^{(\operatorname{TF})}(\phi)&& = \frac{1}{2}\left(\operatorname{e}^{-i\phi}\hat{a}_1^\dagger\hat{a}_{-1} + \operatorname{e}^{i\phi}\hat{a}_{-1}^\dagger\hat{a}_{1}\right)
\end{alignedat}
\end{equation}
providing $|\psi(\theta)\rangle=\operatorname{e}^{-i\theta\hat R_{\mathrm{opt}}}|\psi\rangle$. Let us first focus on $\phi=0$, where $\hat{R}_\mathrm{opt}^{(\operatorname{CBA})}(0)=\hat{S}_x$ and $\hat{R}_\mathrm{opt}^{(\operatorname{TF})}(0)=\hat{J}_x$,
and show that a measurement of $(\hat{N}_+,\,\hat{N}_-)$ is optimal at any~$\theta$. 

The projections $\hat{P}_{N_+,\,N_-}\equiv|N_-,\,N_0,\,N_+\rangle \langle N_-,\,N_0,\,N_+|$ with $N_0=N-N_+-N_-$
onto the eigenstates of $(\hat N_+, \hat N_-)$ are one-dimensional. In addition, $\langle\psi|\operatorname{d}\!\psi\rangle = 0$ and hence $\langle\psi(\theta)|\operatorname{d}\!\psi(\theta)\rangle = 0$ for all $\theta$. Therefore
\begin{equation}
\Re\langle\psi(\theta)|\hat{P}_{N_+,\,N_-}\hat{R}_\mathrm{opt}|\psi(\theta)\rangle=0 \;\; \forall N_+,N_-,\theta
\label{eq:optmeas}
\end{equation}
provides both the necessary and sufficient condition for optimality~\cite{Caves1994}. We observe that $\hat{R}_\mathrm{opt}^n|\psi\rangle$, $n\in\mathbb{N}_0$ has only real coefficients in the Fock basis $\{|N_-,\,N_0,\,N_+\rangle\}$. Furthermore,
\begin{equation}
\begin{alignedat}{2}
&\sum_{D=2n\hphantom{+1}}\hat{P}_D\hat{S}_x^{2m+1}&&|\!\operatorname{CBA}\rangle=0,\\
&\sum_{D=2n+1}\hat{P}_D\hat{S}_x^{2m}&&|\!\operatorname{CBA}\rangle=0
\end{alignedat}
\end{equation}
and
\begin{equation}
\begin{alignedat}{2}
&\sum_{N_\pm=2n\hphantom{+1}}\hat{P}_{N_\pm}\hat{J}_x^{2m+1}&&|\!\operatorname{TF}\rangle=0,\\
&\sum_{N_\pm=2n+1}\hat{P}_{N_\pm}\hat{J}_x^{2m}&&|\!\operatorname{TF}\rangle=0
\end{alignedat}
\end{equation}
for all $n,m\in\mathbb{N}_0$.
This entails
\begin{align}
\begin{split}
&\langle\psi|\hat{R}_\mathrm{opt}^{2n}\hat{P}_{N_+,\,N_-}\hat{R}_\mathrm{opt}^{2m+1}|\psi\rangle=0,\\
&\langle\psi|\hat{R}_\mathrm{opt}^{2m+1}\hat{P}_{N_+,\,N_-}\hat{R}_\mathrm{opt}^{2n}|\psi\rangle=0.
\end{split}
\end{align}
Thus
\begin{align}
\begin{split}
&\langle\psi(\theta)|\hat{P}_{N_+,\,N_-}\hat{R}_\mathrm{opt}|\psi(\theta)\rangle\\
&= i\sum_{j,l=0}^{\infty}\frac{(-1)^{j+l}}{(2j)!(2l)!}\theta^{2(j+l)+1}\\
&\hspace{3.8em}\left(\frac{1}{2j+1}\langle\psi|\hat{R}_\mathrm{opt}^{2j+1}\hat{P}_{N_+,\,N_-}\hat{R}_\mathrm{opt}^{2l+1}|\psi\rangle\right.\\ 
&\hspace{3.8em}\left.- \frac{1}{2l+1}\langle\psi|\hat{R}_\mathrm{opt}^{2j}\hat{P}_{N_+,\,N_-}\hat{R}_\mathrm{opt}^{2l+2}|\psi\rangle\right)\\
&\in i\mathbb{R},
\end{split}
\end{align}
which implies Eq.~\eqref{eq:optmeas}.

Next, we consider a measurement of $\hat{D}$, still for $\phi=0$. Due to $\hat{P}_D = \sum_{N_+-N_-=D} \hat{P}_{N_+,\,N_-}$,
Eq.~\eqref{eq:optmeas} holds also when $\hat{P}_{N_+,\,N_-}$ is substituted by $\hat{P}_D$. Since the $\hat{P}_D$ are no longer one-dimensional, this is not sufficient for optimality \cite{Caves1994}. 
However, we are able to show that for both $|\!\operatorname{CBA}\rangle$ and $|\!\operatorname{TF}\rangle$ the Hilbert space $\mathcal{H}=\operatorname{span}\{|N_+,\,N_0,\,N_-\rangle\}$ can be restricted to $\mathcal{H}'$ such that $\{\hat{R}_\mathrm{opt}^n|\psi\rangle, \,n\in\mathbb{N}_0\}\in\mathcal{H}'$ while the dimensionality of $\hat{P}_D\mathcal{H}'$ for any $D$ is one. Let us start with the TF state. Since $[\hat{J}_x,\,\hat{N}_0]=0$ 
and $\hat{N}_0|\!\operatorname{TF}\rangle=0$, we can choose $\mathcal{H}'=\hat{P}_{N_0=0}\mathcal{H}$. Regarding $|\!\operatorname{CBA}\rangle$, recall that $\hat{S}_x=\hat{L}_x/2$. Hence $[\hat{S}_x,\,\mathbf{\hat{L}}^2]=0$. 
Then $\mathbf{\hat{L}}^2|\!\operatorname{CBA}\rangle=N(N+1)|\!\operatorname{CBA}\rangle$~\cite{LawPRL1998} suggests to set $\mathcal{H}'=\hat{P}_{\mathbf{\hat{L}}^2=N(N+1)}\mathcal{H}$.
$\hat{L}_z$ has a non-degenerate spectrum in $\mathcal{H}'$. Thus $\hat{L}_z=-\hat{D}$ establishes the one-dimensionality of all $\hat{P}_D\mathcal{H}'$. 

To proceed to arbitrary $\phi$ we note that
\begin{equation}
\begin{alignedat}{3}
&\hat{R}_\mathrm{opt}^{(\operatorname{CBA})}(\phi)&&=\operatorname{e}^{2i\phi\hat J_z}&&\hat{S}_x\operatorname{e}^{-2i\phi\hat J_z},\\
&\hat{R}_\mathrm{opt}^{(\operatorname{TF})}(\phi)&&=\operatorname{e}^{i\phi\hat J_z}&&\hat{J}_x\operatorname{e}^{-i\phi\hat J_z}
\end{alignedat}
\end{equation}
and recall that the classical FI~\eqref{eq:CFI} depends only on $P(\mu|\theta)=\langle\psi(\theta)|\hat P_\mu|\psi(\theta)\rangle$ with $\mu$ the possible measurement outcomes. Because $\operatorname{e}^{-i\phi\hat J_z}\hat P_{N_+,\,N_-}\operatorname{e}^{i\phi\hat J_z}=\hat P_{N_+,\,N_-}$ and, thanks to $\hat{D}|\psi\rangle=0$, $\operatorname{e}^{-i\phi\hat J_z}|\psi\rangle=|\psi\rangle$, our results hold for any $\phi$.

\section{Parametric amplification}
\label{app:PA}

We start by introducing the generators of $\mathfrak{su}(1,1)$
\begin{equation}
\hat{L}_0^{(a)}=\frac{1}{4}(2\hat{a}^\dagger\hat{a}+1),\;\;\hat{L}_+^{(a)}=\frac{1}{2}\hat{a}^{\dagger 2},\;\;\hat{L}_-^{(a)}=\frac{1}{2}\hat{a}^2,
\end{equation}
where $\hat{a}^\dagger$ denotes some bosonic creation operator. Under the approximations $N_0\simeq N$ and $\hat{a}_0^\dagger,\,\hat{a}_0 \simeq \sqrt{N_0}$ the Hamiltonian~\eqref{eq:Hamiltonian} becomes, discarding c-numbers, $\hat{H} = \hat{H}_g+\hat{H}_h$ with
\begin{align}
\begin{split}
\hat{H}_{g/h} &=2\hbar\alpha \hat{L}_0^{(g/h)} + \hbar\epsilon_{g/h}\beta \left(\hat{L}_-^{(g/h)}+\hat{L}_+^{(g/h)}\right)
\end{split}
\end{align}
and $\epsilon_{g/h}=\pm 1$. Let us denote the eigenstates of $\hat{N}_g\equiv \hat g^\dag\hat g$ and $\hat N_h\equiv \hat h^\dag \hat h$ by $|n\rangle_{g/h}$. The initial state~($m_f=0$ condensate) 
$|\psi(0)\rangle=|0\rangle_g\otimes|0\rangle_h$ evolves as $|\psi(t)\rangle=|\psi(t)\rangle_g\otimes|\psi(t)\rangle_h$ with $|\psi(t)\rangle_{g/h}\equiv\exp(-i\hat H_{g/h}t/\hbar)|0\rangle_{g/h}$.
An explicit form of $|\psi(t)\rangle_{g/h}$ is obtained from the disentanglement theorem for $\mathfrak{su}(1,1)$ \cite{Truax1985}: for $t,\,r\in\mathbb{R}$, $z\in\mathbb{C}$
\begin{align}
\begin{split}
&\exp\!\left(\left(z\hat{L}_+-z^*\hat{L}_-+2ir\hat{L}_0\right)t\right) \\
&= \exp\!\left(p_+\hat{L}_+\right)\exp\!\left(p_0\hat{L}_0\right)\exp\!\left(p_-\hat{L}_-\right)
\end{split}
\end{align}
with $p_+ = z\tau\sinh(t/\tau)/ C$, $p_0 = -2\ln C$, $p_- = -z^*\tau\sinh(t/\tau)/C$, $C = \cosh\left(t/\tau\right) - ir\tau\sinh\left(t/\tau\right)$, and
$\tau = \left(|z|^2-r^2\right)^{-1/2}$.
The result for $|z|^2=r^2$ is obtained by taking the corresponding limit \cite{Truax1985}. This entails, see also~\cite{PezzeRMP},
\begin{equation}
|\psi(t)\rangle_{g,h}=\frac{1}{\sqrt{|C|}}\sum_{n=0}^{\infty}\epsilon_{g,h}^n\sqrt{\binom{2n}{n}}\left(\frac{c}{2}\right)^n|2n\rangle_{g,h}
\end{equation}
with $c = -i\beta\tau\sinh(t/\tau) / C$, $C = \cosh\left( t/\tau\right)+i\alpha\tau\sinh \left( t/\tau \right)$, and 
$\tau = \left(\beta^2-\alpha^2\right)^{-1/2}$.

The evaluation of $\langle \hat{N}_g \rangle$, $\Delta\hat{N}_g$, and the QFI relies on the generating function
\begin{equation}
f(z) = \sum_{n=0}^{\infty}\binom{2n}{n}\left(\frac{|c|}{2}\right)^{2n}\!\operatorname{e}^{nz} = \frac{1}{\sqrt{1-|c|^2\operatorname{e}^z}}
\end{equation}
for $|c|^2\operatorname{e}^z<1$. We first derive $\langle \hat{N}_g\rangle = \beta^2\tau^2\sinh^2(t/\tau)$ and $(\Delta \hat{N}_{g})^2=2\langle\hat{N}_g\rangle (\langle\hat{N}_g\rangle+1 )$. Applying $N_0\simeq N$ and $\hat{a}_0^\dagger,\,\hat{a}_0 \simeq \sqrt{N_0}$ also to $\hat{\mathbf{S}}$ gives
\begin{equation}
\hat{S}_x\simeq \frac{\sqrt{N}}{2}(\hat{g}+\hat{g}^\dagger),\;\;\hat{S}_y\simeq \frac{\sqrt{N}}{2i}(\hat{g}-\hat{g}^\dagger),\;\;\hat{S}_z\simeq \frac{1}{2}(N - \hat{g}^\dagger\hat{g}).
\end{equation}
The corresponding covariance matrix $\boldsymbol{\Gamma}_{\hat{\mathbf{S}}}$ is block diagonal. The eigenvalues of its $(x,y)$-block evaluate to
\begin{equation}
\lambda^{(xy)}_\pm=
\frac{N}{4}\left(1+2\langle \hat{N}_g\rangle\pm \sqrt{2}\Delta \hat{N}_g\right),
\end{equation}
while the variance of $\hat{S}_z$ is $\lambda^{(z)}=(\Delta \hat{N}_g)^2/4$. Thus, for $N\gg\langle \hat N_g\rangle$ the maximal QFI attainable with rotations generated by linear combinations of the $\hat S_i$ is $4\lambda_+^{(xy)}$. The corresponding (normalized) eigenvector
\begin{equation}
\mathbf{u}^{(g)}_\mathrm{opt}=
\pm\frac{1}{\sqrt{2}}\left(\sqrt{1+\sqrt{2}\frac{\alpha^2\langle\hat{N}_g\rangle}{\beta^2\Delta\hat{N}_g}},\,-\sqrt{1-\sqrt{2}\frac{\alpha^2\langle\hat{N}_g\rangle}{\beta^2\Delta\hat{N}_g}},\,0\right)
\end{equation} 
gives the optimal direction of interferometric rotations in the symmetric subspace.

Recall that $\phi_{gh}\!: \hat{g}^\dagger \leftrightarrow i\hat{h}^\dagger$ and $\phi_\pm\!: \hat{a}_{1}^\dagger \leftrightarrow \hat{a}_{-1}^\dagger$ leave the algebraic relations of the creation and annihilation operators, the initial state $|\psi(0)\rangle$, the Hamiltonian, and thus $|\psi(t)\rangle$ invariant. Hence $\phi_{gh}(\hat N_g)=\hat N_h$ and $\phi_\pm(\hat N_+)=\hat N_-$ imply $\langle \hat N_h \rangle = \langle \hat N_g \rangle$, $ \Delta \hat N_h = \Delta \hat N_g$, $\langle \hat N_+ \rangle = \langle \hat N_- \rangle$, and $ \Delta \hat N_+ = \Delta \hat N_-$. Then $\hat{N}_g+\hat{N}_h=\hat{N}_++\hat{N}_-$ yields $\langle \hat N_+ \rangle = \langle \hat N_g \rangle$. Using $\hat D|\psi(t)\rangle=0$ and $|\psi(t)\rangle =|\psi(t)\rangle_g\otimes |\psi(t)\rangle_h$ we furthermore find that $\Delta \hat N_+=\Delta \hat N_g/\sqrt{2}$. Finally, 
$\phi_{gh}(\hat A_x)=-\hat S_y$, $\phi_{gh}(\hat A_y)=\hat S_x$, and $\phi_{gh}(\hat A_z)=\hat S_z$ entail that the eigenvalues of $\boldsymbol{\Gamma}_{\hat{\mathbf{A}}}$ coincide with the ones of $\boldsymbol{\Gamma}_{\hat{\mathbf{S}}}$, while
\begin{equation}
\mathbf{u}^{(h)}_\mathrm{opt}=\left(-u^{(g)}_{\mathrm{opt},\,y},\,u^{(g)}_{\mathrm{opt},\,x},\,0\right)
\end{equation}
defines, again for $N\gg\langle \hat N_\pm\rangle$, the optimal rotation for phase imprinting within the antisymmetric subspace.

%%%%%%%%%%%%%%%%%%%%%%%%%%%%%%%%%%%%%%%%%%%%%%%%%%%%%%%%%%%%%%%

\bibliographystyle{prsty}

\end{document}